\newtheorem{lemma}{Lemma}
\newcommand{\blind}{0}
\numberwithin{equation}{section}
\theoremstyle{plain}
\newcommand{\href}[2]{{#2}}
\newcommand{\Sec}[1]{\hyperref[sec:#1]{Section~\ref*{sec:#1}}} 
\newcommand{\App}[1]{\hyperref[sec:#1]{Appendix~\ref*{sec:#1}}} 
\newcommand{\Eqn}[1]{\hyperref[eq:#1]{{\rm (\ref*{eq:#1})}}} 
\newcommand{\Part}[1]{\hyperref[part:#1]{(\ref*{part:#1})}} 
\newcommand{\Fig}[1]{\hyperref[fig:#1]{Figure~\ref*{fig:#1}}} 
\newcommand{\Tab}[1]{\hyperref[tab:#1]{Table~\ref*{tab:#1}}} 
\newcommand{\Thm}[1]{\hyperref[thm:#1]{Theorem~\ref*{thm:#1}}} 
\newcommand{\Lem}[1]{\hyperref[lem:#1]{Lemma~\ref*{lem:#1}}} 
\newcommand{\Prop}[1]{\hyperref[prop:#1]{Proposition~\ref*{prop:#1}}} 
\newcommand{\Cor}[1]{\hyperref[cor:#1]{Corollary~\ref*{cor:#1}}} 
\newcommand{\Def}[1]{\hyperref[def:#1]{Definition~\ref*{def:#1}}} 
\newcommand{\Alg}[1]{\hyperref[alg:#1]{Algorithm~\ref*{alg:#1}}} 
\newcommand{\Ex}[1]{\hyperref[ex:#1]{Example~\ref*{ex:#1}}} 
\newcommand{\As}[1]{\hyperref[as:#1]{Assumption~{\rm\ref*{as:#1}}}} 
\newcommand{\Reg}[1]{\hyperref[as:#1]{Condition~\ref*{reg:#1}}} 
\newcommand{\AlgLine}[2]{\hyperref[alg:#1]{line~\ref*{line:#2} of Algorithm~\ref*{alg:#1}}}
\newcommand{\AlgLines}[3]{\hyperref[alg:#1]{lines~\ref*{line:#2}--\ref*{line:#3} of Algorithm~\ref*{alg:#1}}}
\newcommand{\Sec}[1]{{Section~\ref{sec:#1}}} 
\newcommand{\App}[1]{{Appendix~\ref{sec:#1}}} 
\newcommand{\Eqn}[1]{{(\ref{eq:#1})}} 
\newcommand{\Part}[1]{{(\ref{part:#1})}} 
\newcommand{\Fig}[1]{{Figure~\ref{fig:#1}}} 
\newcommand{\Tab}[1]{{Table~\ref{tab:#1}}} 
\newcommand{\Thm}[1]{{Theorem~\ref{thm:#1}}} 
\newcommand{\Lem}[1]{{Lemma~\ref{lem:#1}}} 
\newcommand{\Prop}[1]{{Property~\ref{prop:#1}}} 
\newcommand{\Cor}[1]{{Corollary~\ref{cor:#1}}} 
\newcommand{\Def}[1]{{Definition~\ref{def:#1}}} 
\newcommand{\Alg}[1]{{Algorithm~\ref{alg:#1}}} 
\newcommand{\Ex}[1]{{Example~\ref{ex:#1}}} 
\newcommand{\As}[1]{{Assumption~\ref{as:#1}}} 
\newcommand{\Reg}[1]{{R~\ref{reg:#1}}} 
\newcommand{\AlgLine}[2]{{line~\ref{line:#2} of Algorithm~\ref{alg:#1}}}
\newcommand{\AlgLines}[3]{{lines~\ref{line:#2}--\ref{line:#3} of Algorithm~\ref{alg:#1}}}
\newtheorem{assumption}{Assumption}[section]
\newtheorem{theorem}{Theorem}[section]
\newtheorem{proposition}{Proposition}[section]
\newcommand{\Real}{\mathbb{R}}
\newcommand{\Tra}{^{\sf T}} 
\newcommand{\Inv}{^{-1}} 
\newcommand{\tr}{\operatorname{tr}} 
\newcommand{\Ker}{\operatorname{Ker}} 
\def\vec{\mathop{\rm vec}\nolimits}
\newcommand{\amp}{\mathop{\:\:\,}\nolimits}
\newcommand{\bic}{\text{BIC}}
\newcommand{\V}[1]{{\bm{\mathbf{\MakeLowercase{#1}}}}} 
\newcommand{\VE}[2]{\MakeLowercase{#1}_{#2}} 
\newcommand{\Vhat}[1]{{\bm{\hat \mathbf{\MakeLowercase{#1}}}}} 
\newcommand{\Vtilde}[1]{{\bm{\tilde \mathbf{\MakeLowercase{#1}}}}} 
\newcommand{\Vn}[2]{\V{#1}^{(#2)}} 
\newcommand{\VnE}[3]{{#1}^{(#2)}_{#3}} 
\newcommand{\M}[1]{{\bm{\mathbf{\MakeUppercase{#1}}}}} 
\newcommand{\ME}[2]{\MakeLowercase{#1}_{#2}} 
\newcommand{\Mn}[2]{\M{#1}^{(#2)}} 
\newcommand{\MnE}[3]{\MakeLowercase{#1}^{(#2)}_{#3}} 
\newcommand{\Kron}{\otimes} 
\begin{document}

\def\spacingset#1{\renewcommand{\baselinestretch}%
{#1}\small\normalsize} \spacingset{1}


\if0\blind
{
  \title{\bf Going off the Grid:  Iterative Model Selection for Biclustered Matrix Completion}
  \author{Eric C. Chi\thanks{
    Department of Statistics, North Carolina State University, Raleigh, NC 27695-8203 (E-mail: eric$\_$chi@ncsu.edu)} \,
   Liuyi Hu\thanks{
    Department of Statistics, North Carolina State University, Raleigh, NC 27695-8203 (E-mail: lhu@ncsu.edu)} \,    
    Arvind K. Saibaba\thanks{Department of Mathematics, North Carolina State University, Raleigh, NC 27695-8203 (E-mail: asaibab@ncsu.edu)} \,
    and
    Arvind U. K. Rao\thanks{Department of Bioinformatics and Computational Biology, Division of Quantitative Sciences, The University of Texas MD Anderson Cancer Center, Houston, TX,  77030 (E-mail: aruppore@mdanderson.org).}    \\}
    \date{}
  \maketitle
} \fi

\if1\blind
{
  \bigskip
  \bigskip
  \bigskip
  \begin{center}
    {\LARGE\bf Title}
\end{center}
  \medskip
} \fi

\bigskip
\begin{abstract}
We consider the problem of performing matrix completion with side information on row-by-row and column-by-column similarities. We build upon recent proposals for matrix estimation with smoothness constraints with respect to row and column graphs. We present a novel iterative procedure for directly minimizing an information criterion in order to select an appropriate amount row and column smoothing, namely perform model selection. We also discuss how to exploit the special structure of the problem to scale up the estimation and model selection procedure via the Hutchinson estimator. We present simulation results and an application to predicting associations in imaging-genomics studies.
\end{abstract}

\noindent%
{\it Keywords:}  Convex Optimization, Degrees of Freedom, Information Criterion, Penalization, Sparse Linear Systems, Hutchinson Estimator
\vfill

\newpage
\spacingset{1.45} 

\section{Introduction}
\label{sec:introduction}

In the matrix completion problem, we seek to recover or estimate a matrix, when only a fraction of its entries are observed. While it is impossible to complete an arbitrary matrix using only partial observations of its entries, it may be possible to fully recover matrix entries when the matrix has an appropriate underlying structure. For example, most low-rank matrices can be completed accurately with high probability, by solving a convex optimization problem~\citep{CanRec2009}. Consequently, algorithms for low-rank matrix completion have enjoyed widespread use across many disciplines, including collaborative filtering and recommender systems \citep{Koren2009}, multi-task learning and classification \citep{Amit2007, Argyriou2007, Wu2015}, computer vision \citep{Chen2004}, statistical genetics \citep{Chi2013}, as well as remote sensing \citep{Malek-Mohammadi2014}.

In this paper, we consider matrix completion under a structural assumption that is closely related to the low-rank assumption; i.e.\@, we assume that the matrix entries vary ``smoothly'' with respect to a graphical organization of the rows and columns. For example, in the context of a movie recommendation system, we seek to complete a user-by-movies ratings matrix. We may have additional information about users, such as if pairs of users are friends on a social media application, as well as additional information from a movie database, such as the co-occurrence of certain film principles. We expect the entries of a movie ratings matrix to vary ``smoothly'' over a neighborhood of users, defined by a friendship graph, and over a neighborhood of movies, defined by a shared movie principles graph. When such local similarity structure exists, and is available, it behooves us to leverage this information to predict missing entries in a matrix.

In general, we wish to recover a matrix $\M{Z} \in \Real^{n \times p}$ from a noisy and partially observed matrix $\M{X} \in \Real^{n \times p}$ when there exist similarities between pairs of rows and pairs of columns. Let the parameters $w_{ij} = w_{ji}$ and $\tilde{w}_{ij} = \tilde{w}_{ji}$ for $i=1,\dots,n$ and $j=1,\dots,p$ denote non-negative weights that quantify the similarities between pairs of rows and pairs of columns. 
Let $\Omega \subset \{1, \ldots, n\} \times \{1, \ldots, p\}$ denote the set of observed indices. Finally, let  $\mathcal{P}_\Omega(\M{Z})$ denote the projection operator onto the set of indices $\Omega$ where the $ij$th entry of $\mathcal{P}_\Omega(\M{Z})$ is $\ME{z}{ij}$ if $(i,j) \in \Omega$ and is zero otherwise. With this notation in hand, we can pose this version of the matrix completion task as the following optimization problem:
\begin{equation}
\label{eq:bmc}
\underset{\M{Z} \in \Real^{n\times p}}{\min}\; \ell(\M{Z}) + J(\M{Z}),
\end{equation}
where
\begin{eqnarray*}
\ell(\M{Z}) & \equiv & \frac{1}{2} \lVert \mathcal{P}_\Omega(\M{X}) - \mathcal{P}_\Omega(\M{Z}) \rVert_{\text{F}}^2 \quad \text{and}\\
J(\M{Z}) & \equiv & \frac{\gamma_r}{2}
 \sum_{i<j}w_{ij} \lVert \M{Z}_{i \cdot}-\M{Z}_{j \cdot} \rVert^2_2 + \frac{\gamma_c}{2} \sum_{i<j}\tilde{w}_{ij} \lVert \M{Z}_{\cdot i}-\M{Z}_{\cdot j} \rVert_2^2.
\end{eqnarray*}



In the equations above, $\M{Z}_{\cdot i}$ ($\M{Z}_{\cdot i})$ denotes the $i$th row (column) of the matrix $\M{Z}$ and $(\gamma_r, \gamma_c)$ are nonnegative regularization parameters. The first term $\ell(\M{Z})$ quantifies the misfit between $\M{Z}$ and $\M{X}$ over the observed entries $\Omega$. 
The second term $J(\M{Z})$ is a penalty that incentivizes smoothness with respect to the row and column similarities. The two nonnegative parameters $(\gamma_r, \gamma_c)$ control the relative importance of minimizing the discrepancy between $\M{Z}$ and $\M{X}$ over $\Omega$, and enforcing smoothness of $\M{Z}$ with respect to the given row and column similarities. We refer to the matrix completion problem given in \Eqn{bmc} as the biclustered matrix completion (BMC) problem. Several variations on \Eqn{bmc} have been proposed in the literature prior to this work \citep{Ma2011,Cai2011,Kalofolias2014, Rao2015,ShaPerKal2016}, and smoothness penalties similar to $J(\M{Z})$ have been applied in penalized regression \citep{LiLi2008, RanNovLan2014,Hu2015,Li2016} and functional principal components analysis \citep{HuaSheBuj2009, TiaHuaShe2012, AllGroTay2014}.

\subsection{Contributions}
Our major contributions in this paper are two-fold. First, we derive some new properties of BMC problem, concerning the existence and uniqueness of a solution and as well as the solution's limiting behavior as the penalty parameters tend to infinity. Second, we provide a computational framework for model selection, namely choosing $(\gamma_r,\gamma_c)$. We survey the contents of this paper, emphasizing the main results.

\paragraph{Properties of BMC} Despite the widespread use of the graph smoothing penalties like $J(\M{Z})$ in matrix completion, we present new results on basic properties of the regularizer $J(\M{Z})$, the BMC optimization problem \Eqn{bmc}, and the BMC solution. Many of these results, while intuitive, have been taken for granted without careful justification. A key consequence of these results is that they highlight when BMC also recovers a low-rank matrix. This fact suggests that BMC may be more computationally advantageous over other variants proposed in the literature. Additionally, these results also suggest strategies to sparsify the row and column weights in order to speed up estimation while still ensuring that the BMC problem is well defined. Specifically, we show that the BMC problem always has a solution and give conditions on the missingness pattern and row and column weights that guarantee the solution's uniqueness. Furthermore, we show that as the regularization parameters diverge to infinity, the BMC solution converges to a limiting smooth estimate of the data matrix and also derive what this limit is.

\paragraph{Computational framework for model selection}

The optimization problem in \Eqn{bmc} is convex and differentiable. The solution for a fixed set of regularization parameters $(\gamma_r, \gamma_c)$ requires solving a linear system. As we will see later, this system admits a unique solution under conditions that can be easily verified. We study, in detail, the problem of solving this linear system for a fixed set of parameters, as well as choosing optimal parameters $(\gamma_r, \gamma_c)$, i.e.\@, perform model selection. The prevalent approach to choosing these parameters is by searching for a minimizer of a surrogate measure of predictive performance over a two-way grid of candidate parameters. Common surrogate measures include prediction error on hold-out sets, as in cross-validation, and various information criteria. Cross-validation in particular is popular since it is easy to implement \citep{Rao2015,ShaPerKal2016}. While grid-search may be computationally feasible for choosing a single parameter, it can be prohibitively expensive when selecting two parameters since each grid point requires fitting a model for those parameters, and 
in the case of BMC fitting a model requires inverting a potentially very large matrix. Moreover, it requires pre-specifying a grid of regularization parameters.

\begin{figure}[ht]
    \centering
	\begin{tabular}{cccc}
		\subfloat[Iterative Model Selection (IMS) Path]{\includegraphics[scale=0.38]{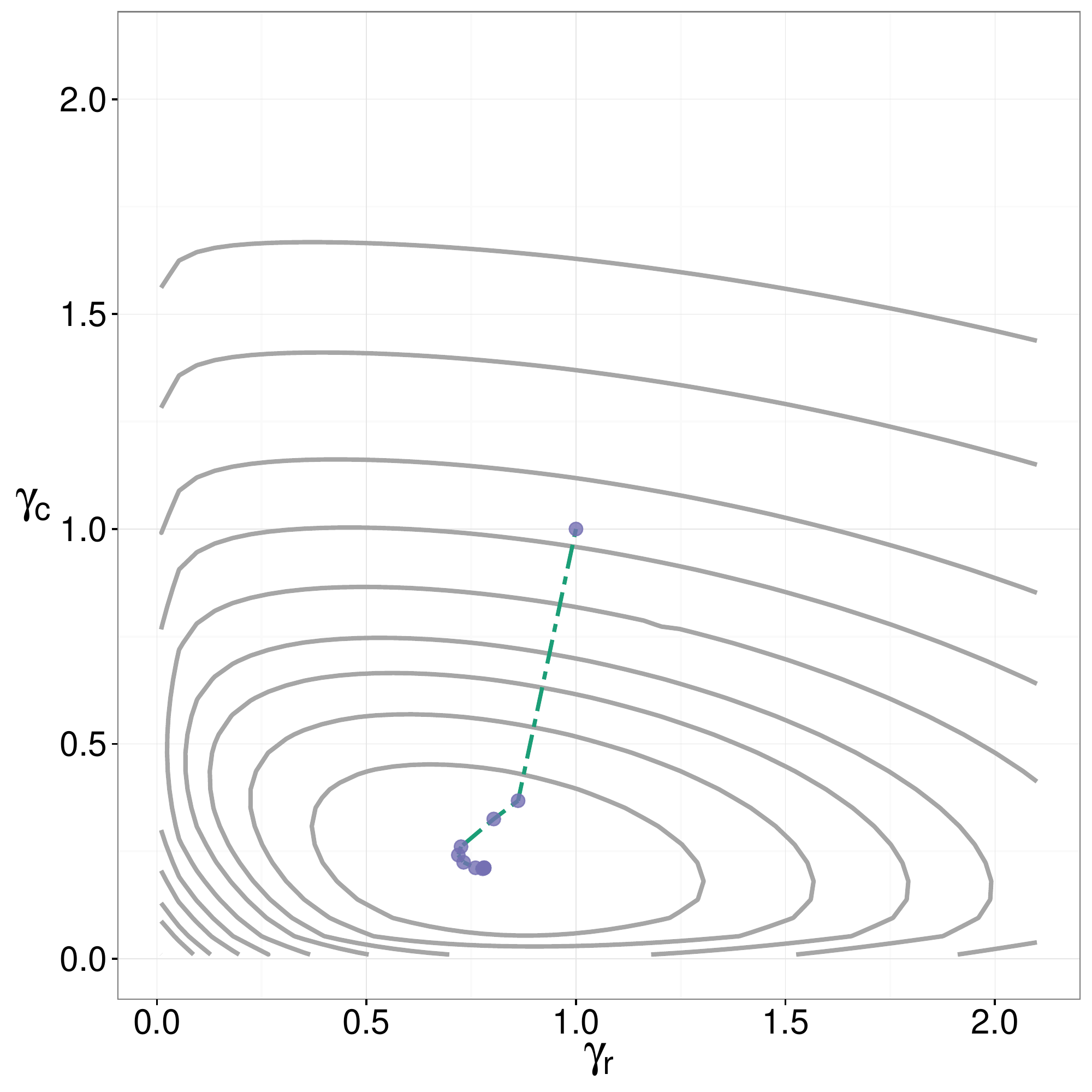}
		\label{fig:search_path}} 
	    & \subfloat[Grid-Search]{\includegraphics[scale=0.38]{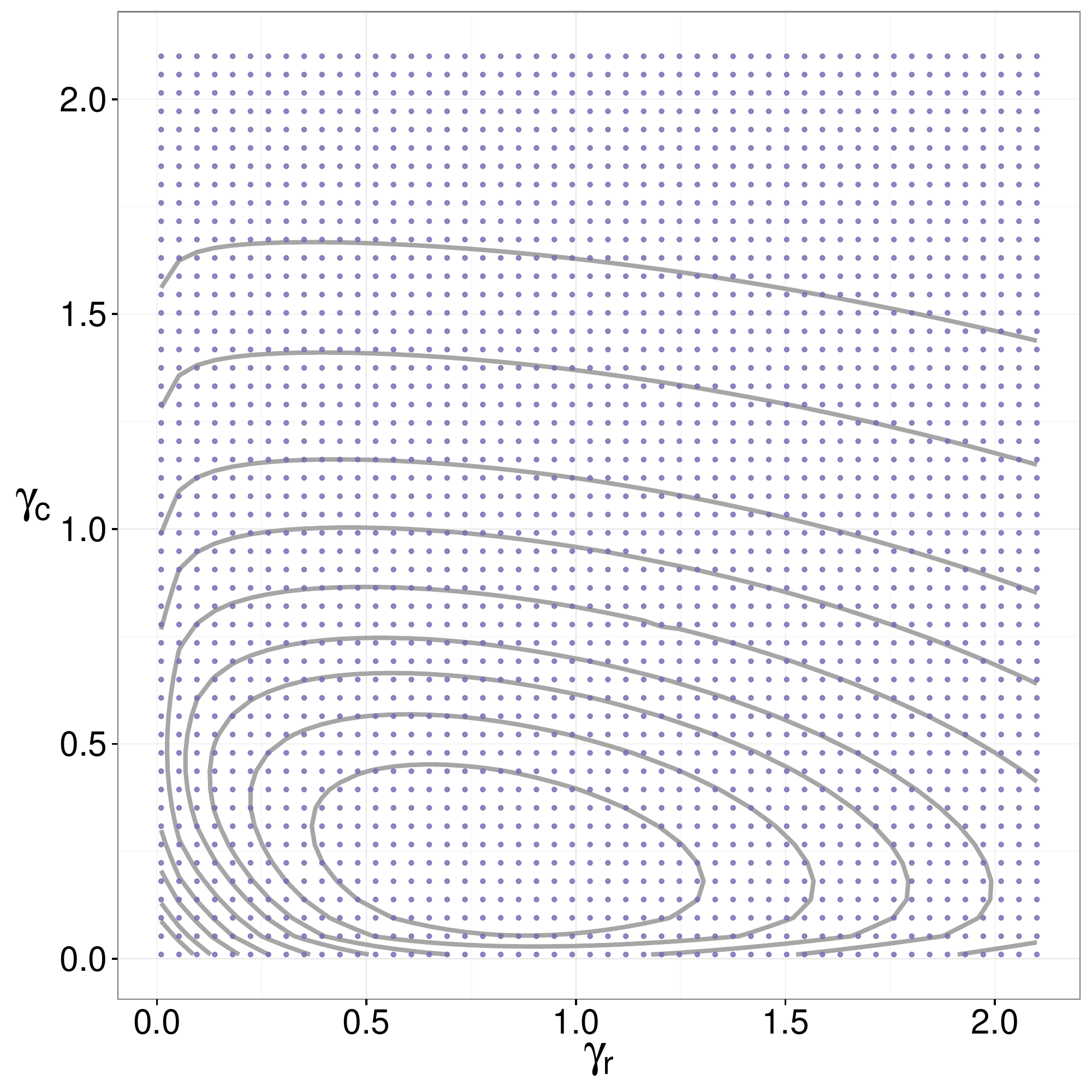}
		\label{fig:bic_grid}} \\			
	\end{tabular}
	\caption{Searching for the minimizer of the BIC in order to find regularization parameters for completing a 100-by-100 matrix. IMS requires 12 iterations to converge to the minimizer; each iteration's most expensive step requires solving a $10^4$-by-$10^4$ linear system. Searching the 50-by-50 grid requires solving $2,500$ different $10^4$-by-$10^4$ linear systems. \label{fig:search_bic}}
\end{figure}

Our second contribution is a novel scalable strategy for model selection in BMC problems based on directly minimizing the Bayesian Information Criterion (BIC). The BIC for \Eqn{bmc} is continuously differentiable and is amenable to minimization by Quasi-Newton methods. To further scale up our procedure, we introduce a refinement based on the Hutchinson estimator to approximate the BIC, and then minimize this approximation. Our resulting procedure, which we call Iterative Model Selection (IMS), leads to drastic reduction in the computational time to select $(\gamma_r, \gamma_c)$ and does not require pre-specifying a grid of regularization parameter pairs to explore. \Fig{search_path} shows an example of the search path taken by IMS exploring the BIC surface on one of the simulated problems described in \Sec{simulated_data}. IMS took 12 iterations to converge to the minimum. Consider searching the BIC surface over a 50-by-50 grid of candidate parameters. \Fig{bic_grid} shows the set of 50-by-50 grid points at which the BIC would have to be evaluated. Each evaluation requires solving a large linear system. As we show later in \Sec{model_selection}, the dominant calculation at each IMS iteration is solving the same linear system. While similar smoothing parameters would be chosen by the two procedures, this simple example illustrates how the naive grid-search may blindly evaluate the BIC at many points far from a minimum and therefore may unnecessarily solve far more linear systems than the IMS. In this example, grid-search would solve 2,500 linear systems, while IMS would solve 12 to arrive at essentially the same model.

To summarize, the IMS path sports the follows advantages over the standard grid-search: (i) In practice, it often takes a more direct route to a model minimizing the BIC leading to potentially many fewer linear system solves, (ii) it does not not require pre-specifying the grid, and (iii) consequently, model selection is not restricted to a finite set of pre-specified grid points. In short, by enabling the model search to go off the tuning parameter grid, we can perform similar and sometimes superior model selection while also reaping significant savings in computation time.  


The rest of the paper is organized as follows. In \Sec{prior_art}, we review the relationship of the BMC problem to the prior art in matrix completion. In \Sec{bmc_solution}, we present new results on properties of the BMC solution.
In \Sec{estimation}, we discuss the problem of solving \Eqn{bmc} for a fixed pair of regularization parameters. In \Sec{model_selection}, we frame the model selection problem and discuss how to efficiently search the regularization parameter space with IMS to select a model with good prediction accuracy. In \Sec{monte_carlo}, we elaborate on how to further scale up IMS using stochastic approximation strategies. In \Sec{Examples}, we present an empirical comparison of IMS and standard grid-based regularization parameter selection methods on both simulated as well as a real data example from radiogenomics. In \Sec{conclusion}, we close with a discussion.

\section{Relationship to Prior Art}
\label{sec:prior_art}

To put BMC into context and clarify its connections to prior art, we review the two primary formulations of matrix completion in the literature: low-rank matrix completion (LMRC) and matrix completion on graphs (MCG).

\paragraph{Low-Rank Matrix Completion (LRMC)} In the noisy LRMC problem, we seek to recover a denoised matrix $\M{Z} \in \Real^{n \times p}$ from a noisy and incomplete matrix $\M{X} \in \Real^{n \times p}$ by solving the following constrained optimization problem:
\begin{equation}
\label{eq:lrmc}
\text{minimize}\; \ell(\M{Z}) \quad\quad \text{subject to} \quad \quad \text{rank}(\M{Z}) \leq r.
\end{equation}

This formulation balances the tradeoff between how well $\M{Z}$ matches $\M{X}$ over the observed entries $\Omega$ and model complexity of $\M{Z}$ as measured by its rank. As we relax the bound on the rank $r$ by making it larger, we can better fit the data at risk of overfitting it.

Due to the rank constraint, \Eqn{lrmc} is a combinatorial optimization problem and quickly becomes impractical to solve as the problem size increases. Fortunately, we can solve the following computationally tractable convex problem instead:
\begin{eqnarray}
\label{eq:nuc}
\text{minimize}\; & \frac{1}{2} \lVert \mathcal{P}_\Omega(\M{X}) - \mathcal{P}_\Omega(\M{Z}) \rVert_{\text{F}}^2 + \gamma_n \lVert \M{Z} \rVert_*
\end{eqnarray}
As before in \Eqn{bmc}, the first term quantifies how well $\M{Z}$ approximates $\M{X}$ over the observed entries $\Omega$. The second term $\lVert \M{Z} \rVert_*$ denotes the nuclear norm of $\M{Z}$, which is the sum of its singular values, and the nonnegative regularization parameter $\gamma_n$ trades off the emphasis on these two terms. Problem \Eqn{nuc} is related to problem \Eqn{lrmc} through the fact that the nuclear norm of a matrix is the tightest convex approximation to its rank \citep{Fazel2002}. Remarkably, under suitable conditions on the missingness patterns defined by $\Omega$, the solution to the convex problem in \Eqn{nuc} also coincides with those of the combinatorial problem in \Eqn{lrmc} with high probability \citep{Candes2010}. 

\paragraph{Matrix Completion on Graphs (MCG)} 
Given how successful the low-rank paradigm is, a natural strategy for incorporating information on row and column similarities would be to augment \Eqn{nuc} with the penalty $J(\M{Z})$ and solve the following convex optimization problem:
\begin{eqnarray}
\label{eq:mcg}
\underset{\M{Z} \in \Real^{n\times p}}{\min}\; \ell(\M{Z}) + \frac{\gamma_n}{2} \lVert \M{Z} \rVert_* + J(\M{Z}).
\end{eqnarray}

With respect to BMC, the only difference between \Eqn{bmc} and \Eqn{mcg} is the addition of a nuclear norm penalty in \Eqn{mcg}. While the problem defined in \Eqn{mcg} is also convex, including the nuclear norm penalty drastically complicates the estimation procedure. Solving LRMC is tractable because there exist polynomial time iterative solvers. Nonetheless, iterative solvers for \Eqn{nuc} and consequently \Eqn{mcg} typically require computing an expensive singular value decomposition (SVD) to account for the nuclear norm regularizer.  Considerable attention has been given to either formulate alternative non-convex optimization problems that omit the nuclear norm penalty entirely~\citep{Burer2003,Srebro2005, Rao2015}, or performing judiciously chosen {\em low-rank} SVD calculations \citep{MazHasTib2010, CaiCanShe2010}. Moreover, there are now three tuning parameters $(\gamma_n,\gamma_r, \gamma_c)$ that tradeoff the emphasis on the data-fit and the structure imposed on $\M{Z}$. Given the costs of including the nuclear norm penalty, a natural question to ask is how much added benefit is gained by including it?  

The following illustrative example provides some evidence that the penalty $J(\M{Z})$ is typically sufficient for completion tasks when the matrices exhibit strong row and column clustering structure. Such matrices exhibit a checkerboard or biclustered structure under row and column reordering. 

\begin{figure}[t]
    \centering
	\begin{tabular}{cccc}
		\subfloat[Original]{\includegraphics[scale=0.11]{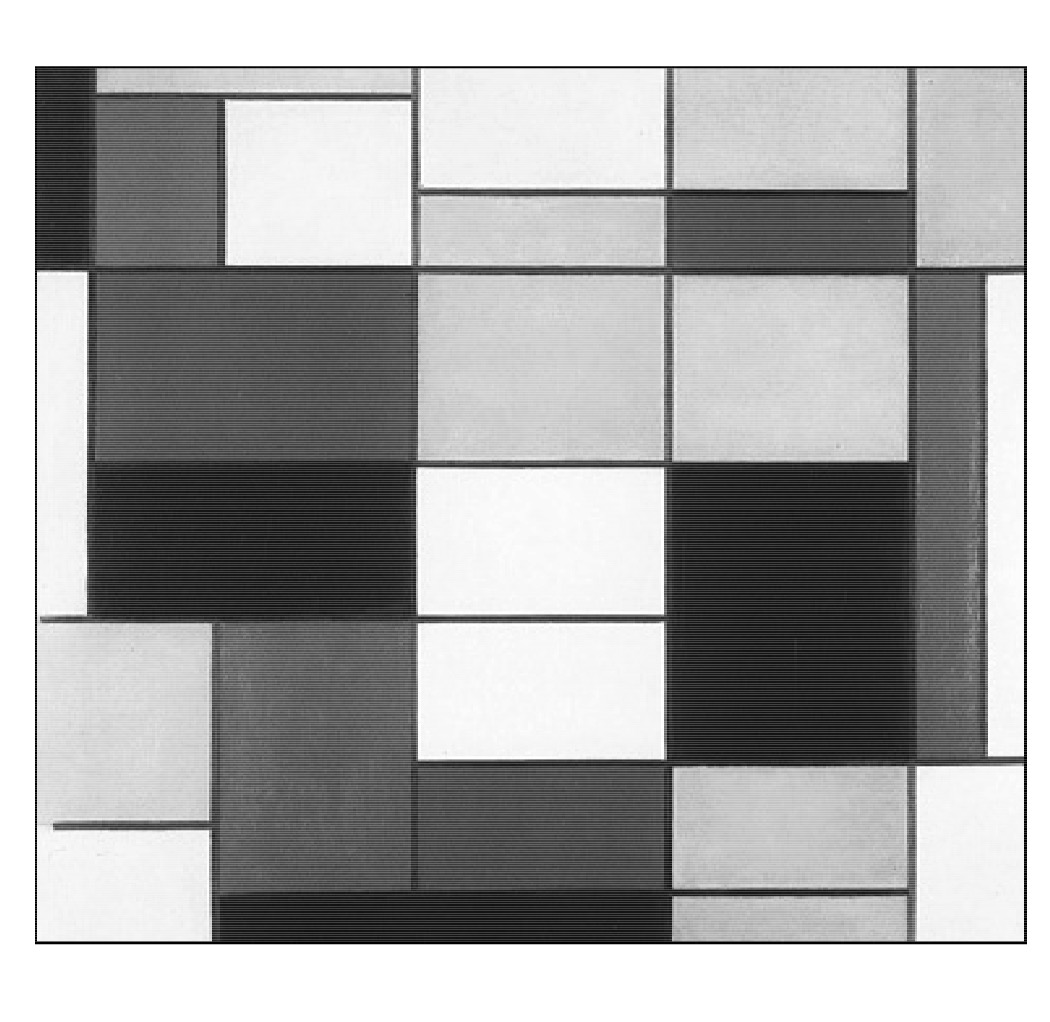}
		\label{fig:mondrian_true}} 
	    & \subfloat[Noise + Missing 50\%]{\includegraphics[scale=0.11]{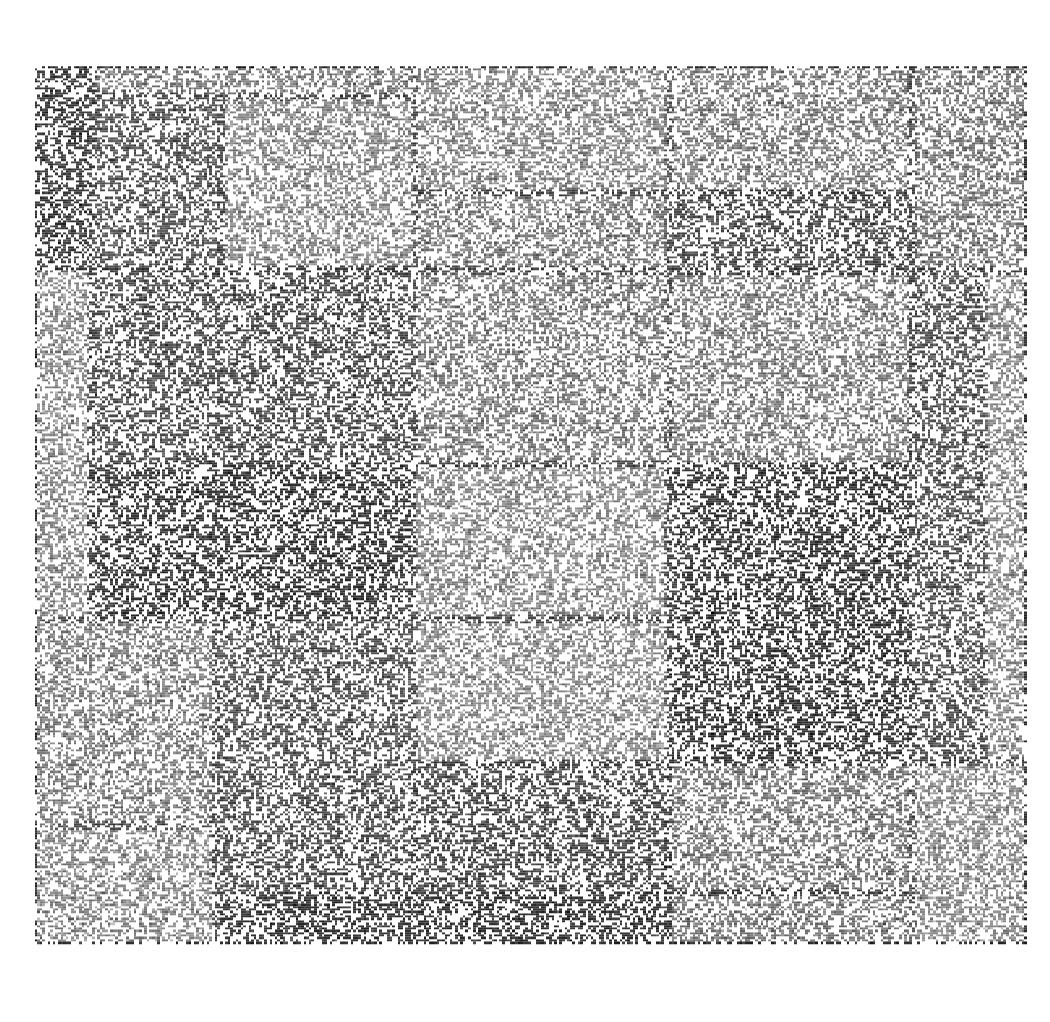}
		\label{fig:mondrian_corrupted}}
	  &  \subfloat[Low-Rank Completion]{\includegraphics[scale=0.11]{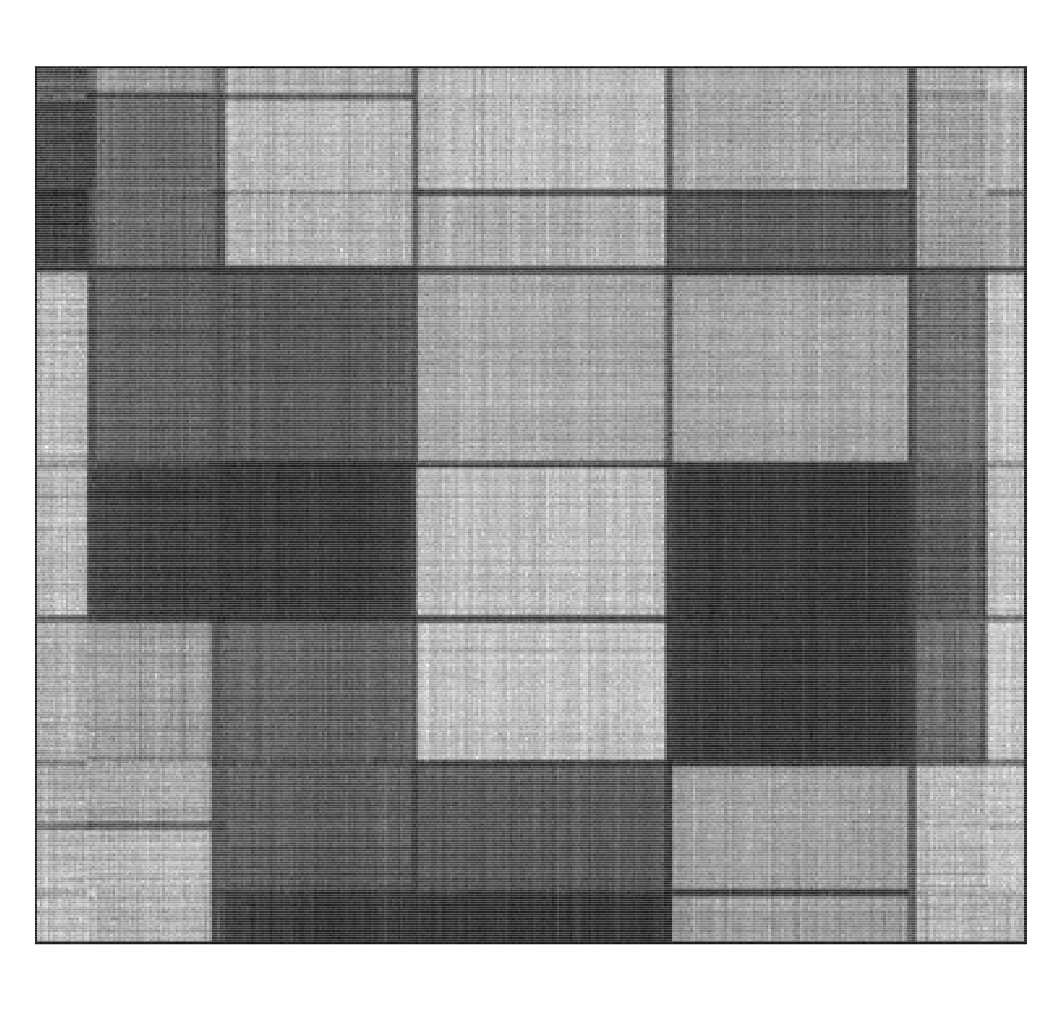}
		\label{fig:mondrian_svt}}
	    &\subfloat[Biclustered Completion]{\includegraphics[scale=0.11]{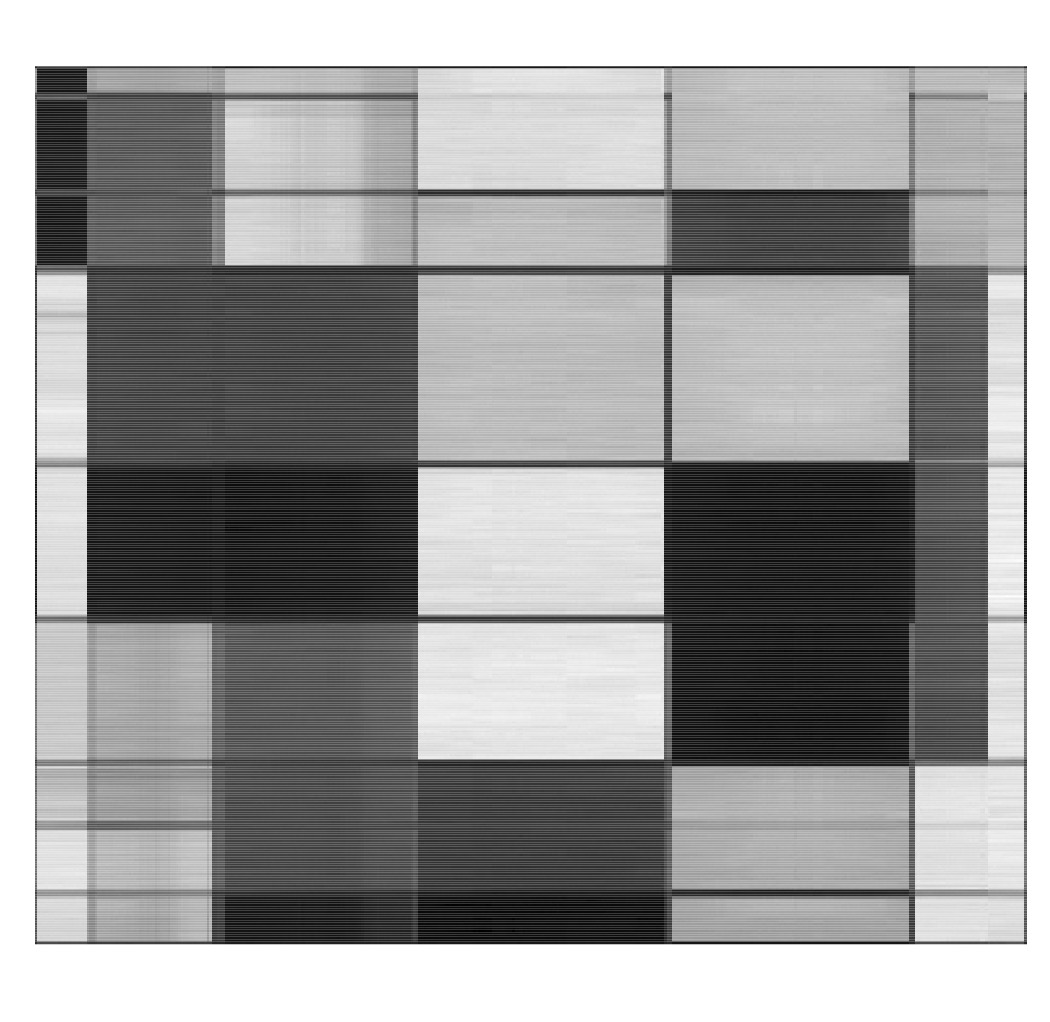}
		\label{fig:mondrian_bmc}}\\ 				
	\end{tabular}
	\caption{Composition A by Piet Mondrian. The matrix is 370-by-380. Each element takes on an integer value between 0 and 255. We added i.i.d.\ $\mathcal{N}(0,\sigma^2)$ noise where $\sigma = 50$ and removed 50\% of the entries. Missing entries were then estimated using low-rank matrix completion and biclustered matrix completion \label{fig:mondrian}}
\end{figure}
\Fig{mondrian} compares the results from performing LRMC and BMC on a digital replica of the oil painting `Composition A' (\Fig{mondrian_true}) by the Dutch painter Piet Mondrian\footnote{A jpeg file was obtained from \href{http://www.wikiart.org/en/piet-mondrian/composition-a-1923}{http://www.wikiart.org/en/piet-mondrian/composition-a-1923}.} after adding noise and removing half its entries (\Fig{mondrian_corrupted}). To the eye, both LRMC (\Fig{mondrian_svt}) and BMC (\Fig{mondrian_bmc}) appear to give reasonably good reconstructions. Further inspection reveals that the BMC predictions have lower mean squared error over the unobserved entries than LRMC. Details on this experiment and the MSE calculations are in the Supplementary Materials.

The comparable performance of LRMC and BMC on this example suggests that the nuclear norm penalty in the MCG problem may be an unnecessary computational complication when there is an underlying biclustering structure. Indeed, we will see next that the penalty $J(\M{Z})$ shrinks solutions towards a low-rank matrix defined by the connectivity structure of the underlying row and column graphs. 

\section{Properties of the BMC Solution}
\label{sec:bmc_solution}

The BMC formulation is related to recent work by \cite{ShaPerKal2016}; however, they present results from a signal processing perspective. In contrast, our perspective is on shrinkage estimation. Furthermore, the results on matrix completion are new. All proofs are in \Sec{proof} of the Appendix.

To better understand the action of $J(\M{Z})$, we need to review some basic facts from algebraic graph theory. Let $\mathcal{G} = (V, E)$ denote an undirected graph with a vertex set $V = \{1, \ldots, n\}$ and an edge set $E = V \times V$. A weighted undirected graph also includes a non-negative weight
function $w : V \times V \rightarrow \Real_+$ that is symmetric in its arguments, namely $w(i,j) = w(j,i)$. 
The set $A \subset V$ is a connected component of $\mathcal{G}$ if (i) there is a sequence of edges forming a path between every pair of vertices in $A$ and (ii) none of its vertices are connected to any vertices in its complement $V \backslash A$. Let $\V{\chi}_A$ denote the indicator function on the set of vertices $A \subset \mathcal{V}$, namely $\V{\chi}_A(i) = 1$ if $i \in A$ and $\V{\chi}_A(i) = 0$ if $i \not \in A$. 
Recall that the graph Laplacian $\M{L} \in \Real^{n \times n}$ of $\mathcal{G}$ is a symmetric positive semidefinite matrix given by
\begin{eqnarray*}
\ME{L}{ij} & = & \begin{cases}
\underset{(i,i') \in E}{\sum} \ME{w}{ii'} & \text{if $i = j$} \\
- w_{ij} & \text{otherwise.}
\end{cases}
\end{eqnarray*}
Define a weighted undirected row graph $\mathcal{G}_r = (V_r, E_r)$ with $V = \{1, \ldots, n\}$ and weights $w_{ij}$, and denote its graph Laplacian by $\M{L}_r \in \Real^{n \times n}$. We use analogous notation for a weighted undirected column graph $\mathcal{G}_r$.

It is straightforward to show that the regularizer $J(\M{Z})$ can be expressed in terms of the two graph Laplacians, as
\begin{eqnarray*}
J(\M{Z}) & = & \frac{\gamma_r}{2}\tr(\M{Z}\Tra\M{L}_r\M{Z}) + \frac{\gamma_c}{2}\tr(\M{Z}\M{L}_c\M{Z}\Tra).
\end{eqnarray*}
The expression above explicitly characterizes the shrinkage action of $J(\M{Z})$ in terms of the connectivity properties of $\mathcal{G}_r$ and $\mathcal{G}_c$. We present a result which gives conditions under which the penalty $J(\M{Z}) =0$.
\begin{proposition}
\label{prop:shrinkage}
Suppose that there are $R$ row connected components $A_1, \ldots, A_R$ in $\mathcal{G}_r$ and $C$ column connected components $B_1, \ldots, B_C$ in $\mathcal{G}_c$
Then the penalty $J(\M{Z}) = 0$ if and only if $\M{Z} = 
\sum_{r=1}^R \sum_{c=1}^C \mu_{rc} \V{\chi}_{A_r} \V{\chi}_{B_c}\Tra$ for some $\mu_{rc}$ for $r = 1, \ldots, R$ and $c = 1, \ldots, C$.
\end{proposition}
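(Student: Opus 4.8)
The plan is to use positive semidefiniteness of the two graph Laplacians to reduce the condition $J(\M{Z}) = 0$ to a pair of structural constraints on the rows and columns of $\M{Z}$, and then to translate those constraints into the claimed block form. Throughout I assume the regularization parameters are strictly positive (if one vanishes, the corresponding term drops out of $J$ and the characterization changes accordingly).

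First, starting from the Laplacian form $J(\M{Z}) = \tfrac{\gamma_r}{2}\tr(\M{Z}\Tra\M{L}_r\M{Z}) + \tfrac{\gamma_c}{2}\tr(\M{Z}\M{L}_c\M{Z}\Tra)$, I would split each trace into a sum of nonnegative quadratic forms, $\tr(\M{Z}\Tra\M{L}_r\M{Z}) = \sum_{j=1}^{p}\M{Z}_{\cdot j}\Tra\M{L}_r\M{Z}_{\cdot j}$ and $\tr(\M{Z}\M{L}_c\M{Z}\Tra) = \sum_{i=1}^{n}\M{Z}_{i\cdot}\M{L}_c\M{Z}_{i\cdot}\Tra$, using that $\M{L}_r,\M{L}_c\succeq 0$. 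Since $\gamma_r,\gamma_c>0$, it follows that $J(\M{Z})=0$ if and only if every column of $\M{Z}$ lies in $\Ker(\M{L}_r)$ and every row of $\M{Z}$ lies in $\Ker(\M{L}_c)$.

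Second, I would invoke the classical algebraic-graph-theory fact that $\Ker(\M{L}_r) = \operatorname{span}\{\V{\chi}_{A_1},\ldots,\V{\chi}_{A_R}\}$: the identity $\V{v}\Tra\M{L}_r\V{v} = \sum_{i<i'} w_{ii'}(v_i - v_{i'})^2$ shows the form vanishes exactly when $\V{v}$ is constant across every edge, i.e., constant on each connected component, and the component indicators are linearly independent because they have disjoint supports. The analogous statement holds for $\M{L}_c$ with the column components $B_1,\ldots,B_C$. Combining with the previous step, $J(\M{Z})=0$ if and only if each column of $\M{Z}$ is constant on each $A_r$ and each row of $\M{Z}$ is constant on each $B_c$.

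Finally I would close the loop with the claimed parametrization. The ``if'' direction is immediate substitution: for $\M{Z} = \sum_{r,c}\mu_{rc}\V{\chi}_{A_r}\V{\chi}_{B_c}\Tra$ each column is a linear combination of the $\V{\chi}_{A_r}$ and each row a linear combination of the $\V{\chi}_{B_c}$, so both constancy conditions hold. For the ``only if'' direction, fix $r,c$ and pick any $i,i'\in A_r$ and $j,j'\in B_c$; constancy of column $j$ on $A_r$ gives $z_{ij}=z_{i'j}$, and constancy of row $i'$ on $B_c$ gives $z_{i'j}=z_{i'j'}$, so $\M{Z}$ is constant on the block $A_r\times B_c$. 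Denoting this common value $\mu_{rc}$ and using that $\{A_r\times B_c\}$ partitions $\{1,\ldots,n\}\times\{1,\ldots,p\}$ yields $\M{Z} = \sum_{r,c}\mu_{rc}\V{\chi}_{A_r}\V{\chi}_{B_c}\Tra$. The only genuinely delicate ingredient is the characterization of $\Ker(\M{L}_r)$ and $\Ker(\M{L}_c)$; the remaining steps are elementary bookkeeping.
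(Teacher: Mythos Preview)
Your proposal is correct. Both your argument and the paper's hinge on the same key ingredient, the characterization of $\Ker(\M{L}_r)$ and $\Ker(\M{L}_c)$ as the spans of the component indicators (the paper quotes this as a proposition from the spectral-clustering literature; you sketch it directly from the quadratic form). The ``if'' directions are essentially identical.

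The ``only if'' direction is where the routes diverge. You expand the traces column-by-column and row-by-row, deduce that each column lies in $\Ker(\M{L}_r)$ and each row in $\Ker(\M{L}_c)$, and then chase indices to show $\M{Z}$ is constant on each block $A_r\times B_c$. The paper instead decomposes $\M{Z} = \M{M} + \M{N}$ with $\M{M}\in\mathcal{F}=\operatorname{span}\{\V{\chi}_{A_r}\V{\chi}_{B_c}\Tra\}$ and $\M{N}\in\mathcal{F}^\perp$, observes that the Laplacian terms vanish on $\M{M}$ so $\tr(\M{N}\Tra\M{L}_r\M{N})=\tr(\M{N}\M{L}_c\M{N}\Tra)=0$, and concludes $\M{N}=\M{0}$ from $\M{N}\in\mathcal{F}^\perp$. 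Your entrywise argument is more elementary and fully self-contained; the paper's orthogonal-complement argument is more compact but, as written, leaves the final step (why $\M{N}\in\mathcal{F}^\perp$ together with the two vanishing traces forces $\M{N}=\M{0}$) to the reader, and in fact that step is exactly the implication you spell out explicitly.
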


\Prop{shrinkage} suggests that the penalty $J(\M{Z})$ incentivizes approximations of $\M{X}$ whose rows and columns are spanned by the indicator functions of the connected components of the row and column graphs $\mathcal{G}_r$ and $\mathcal{G}_c$. In other words, $J(\M{Z})$ shrinks estimates to matrices that are piecewise constant on submatrices defined by the functions $\V{\chi}_{A_r}\V{\chi}_{B_r}\Tra$. We refer to these submatrices as biclusters or checkerboard patches. Indeed, 
suppose that the data matrix is a linear combination of the outer products of the indicator functions of the row and column connected components, namely
\begin{eqnarray}
\label{eq:bicluster_model}
\M{X} & = & \sum_{r=1}^R \sum_{c=1}^C \mu_{rc} \V{\chi}_{A_r} \V{\chi}_{B_c}\Tra.
\end{eqnarray}
Given \Prop{shrinkage}, we intuitively expect that the BMC estimate should be able to exactly recover missing entries in this scenario. This is indeed the case, provided the missingness pattern is reasonable. We make explicit what we mean by reasonable in the following assumption, which will be invoked throughout the rest of this paper.

\begin{assumption}
  \label{as:missingness}
  If there are $R$ row connected components $A_1, \ldots, A_R$ in $\mathcal{G}_r$ and $C$ column connected components $B_1, \ldots, B_C$ in $\mathcal{G}_c$, then $\mathcal{P}_\Omega (\V{\chi}_{A_r} \V{\chi}_{B_c}\Tra) \not = \V{0}$ for all $r = 1, \ldots, R$ and $c = 1, \ldots, C$.
\end{assumption}
In words, \As{missingness} states that every checkerboard patch defined by a pair of row and column connected components must have at least one observation. Under this assumption and the ideal scenario presented in \Eqn{bicluster_model}, the BMC estimate of the missing entries is exact.
\begin{proposition}
\label{prop:exact_recovery}
Suppose that \As{missingness} holds. Then $\M{Z} = \M{X}$ in \Eqn{bicluster_model} is the unique global minimizer to \Eqn{bmc} for all positive $\gamma_r$ and  $\gamma_c$.
\end{proposition}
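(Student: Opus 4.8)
The plan is to show that $\M{Z} = \M{X}$ is the unique minimizer by combining three facts: that $\M{X}$ of the form \Eqn{bicluster_model} annihilates the penalty, that it is feasible with zero data-fit, and that \As{missingness} forces any other candidate to strictly increase the objective. First I would observe that by \Prop{shrinkage}, any $\M{X}$ of the form in \Eqn{bicluster_model} satisfies $J(\M{X}) = 0$, and trivially $\ell(\M{X}) = 0$ since $\mathcal{P}_\Omega(\M{X}) - \mathcal{P}_\Omega(\M{X}) = \V{0}$. Hence the objective value of \Eqn{bmc} at $\M{Z} = \M{X}$ is $0$. Since both $\ell$ and $J$ are nonnegative, $\M{X}$ is a global minimizer, and the objective value at any global minimizer must be $0$.

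The heart of the argument is uniqueness. Suppose $\M{Z}^\star$ is any global minimizer; then $\ell(\M{Z}^\star) + J(\M{Z}^\star) = 0$, and by nonnegativity both $\ell(\M{Z}^\star) = 0$ and $J(\M{Z}^\star) = 0$. From $J(\M{Z}^\star) = 0$ and \Prop{shrinkage}, we may write $\M{Z}^\star = \sum_{r=1}^R \sum_{c=1}^C \nu_{rc} \V{\chi}_{A_r} \V{\chi}_{B_c}\Tra$ for some scalars $\nu_{rc}$. From $\ell(\M{Z}^\star) = 0$ we get $\mathcal{P}_\Omega(\M{Z}^\star) = \mathcal{P}_\Omega(\M{X})$, i.e., $\mathcal{P}_\Omega\!\left(\sum_{r,c} (\nu_{rc} - \mu_{rc}) \V{\chi}_{A_r} \V{\chi}_{B_c}\Tra\right) = \V{0}$. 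Now I would fix a pair $(r_0, c_0)$ and evaluate this identity at an index $(i,j) \in \Omega$ that lies inside the $(r_0,c_0)$ checkerboard patch, i.e., $i \in A_{r_0}$ and $j \in B_{c_0}$; such an index exists precisely by \As{missingness}. Because the connected components $A_1, \ldots, A_R$ partition the rows and $B_1, \ldots, B_C$ partition the columns, the outer products $\V{\chi}_{A_r}\V{\chi}_{B_c}\Tra$ have disjoint supports, so at the entry $(i,j)$ only the $(r_0,c_0)$ term survives; the identity then reads $\nu_{r_0 c_0} - \mu_{r_0 c_0} = 0$. Since $(r_0,c_0)$ was arbitrary, $\nu_{rc} = \mu_{rc}$ for all $r,c$, hence $\M{Z}^\star = \M{X}$.

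The main obstacle — really the only subtlety — is being careful about the partition structure of the connected components: one must note that distinct connected components of a graph are disjoint and their union is all of $V$, so the indicator vectors $\{\V{\chi}_{A_r}\}$ (resp. $\{\V{\chi}_{B_c}\}$) have pairwise disjoint supports, which is what makes the entrywise evaluation pick out a single coefficient. This is what converts ``at least one observation per patch'' (\As{missingness}) into a system of equations that uniquely pins down every $\nu_{rc}$. Everything else is bookkeeping: nonnegativity of $\ell$ and $J$, and the already-established \Prop{shrinkage}.
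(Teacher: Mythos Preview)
Your proof is correct, but it takes a different route from the paper's. The paper's argument is a two-liner: the objective in \Eqn{bmc} is nonnegative, $\M{Z}=\M{X}$ attains the value zero, and then it simply invokes \Thm{existence} (the general existence/uniqueness result for the BMC problem under \As{missingness}) to conclude that this minimizer is unique. In contrast, you establish uniqueness directly and self-containedly: you use \Prop{shrinkage} to force any zero-objective point into the checkerboard form $\sum_{r,c}\nu_{rc}\V{\chi}_{A_r}\V{\chi}_{B_c}\Tra$, then exploit the disjoint-support structure of the connected-component indicators together with \As{missingness} to pin down each coefficient $\nu_{rc}=\mu_{rc}$.

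Your approach is more elementary and avoids the forward dependence on \Thm{existence} (whose proof requires analyzing the kernel of $\M{S}$ via \Lem{l_S_spd}); it also makes transparent exactly how \As{missingness} is used, namely to supply one observed entry per patch so that the coefficient comparison goes through. The paper's approach is shorter once \Thm{existence} is in hand and has the advantage of not re-deriving a special case of uniqueness that the general theorem already covers. Both are valid; yours would be preferable if one wanted \Prop{exact_recovery} to stand on its own before the heavier machinery is developed.
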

There are two important observations about the form of $\M{X}$ in \Eqn{bicluster_model}.
First, $\M{X}$ that can be expressed as in \Eqn{bicluster_model} corresponds to the checkerboard pattern we seek to recover. Second, such $\M{X}$ are low-rank, when the number of row clusters $R < n$ and the number of column clusters $C < p$ and consequently $\M{X}$ in \Eqn{bicluster_model} has rank at most $RC \ll np$. The second observation motivates employing the simpler BMC over MCG when the underlying matrix has a biclustered structure. 

The penalty $J(\M{Z})$ is already shrinking solutions towards a low-rank solution, likely rendering the addition of a nuclear norm penalty a computationally expensive redundancy.  
Of course, this is an ideal case when the data matrix $\M{X}$ has the form in \Eqn{bicluster_model}. We bring it up mainly to understand (i) what $J(\M{Z})$ is shrinking estimates towards, (ii) when the nuclear norm  may be unnecessary, and consequently (iii) for what kind of data matrices BMC is best equipped to recover. These results suggest that BMC should perform well when the true underlying matrix has an approximately checkerboard pattern and row and column weights that are consistent with that pattern can be supplied. Experiments in \Sec{Examples} will confirm this suspicion. For now though, we turn our attention to the properties of the BMC problem and solution for a general data matrix $\M{X}$ and general set of row and column weights.


Our first main result concerns the existence and uniqueness of the solution to the BMC problem \Eqn{bmc}.

\begin{theorem}
\label{thm:existence}
A solution to the BMC problem \Eqn{bmc} always exists. The solution is unique if and only if \As{missingness} holds and $\gamma_r$ and $\gamma_c$ are strictly positive. If \As{missingness} does not hold, then there are infinitely many solutions to \Eqn{bmc}.
\end{theorem}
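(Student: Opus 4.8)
The plan is to vectorize \Eqn{bmc} into an unconstrained convex quadratic program and then read off both claims from the null space of its Hessian. Write $\V{z} = \vec(\M{Z})$, $\V{x} = \vec(\M{X})$, and let $\M{P}_\Omega$ be the diagonal $0$--$1$ matrix selecting the coordinates indexed by $\Omega$, so that $\ell(\M{Z}) = \tfrac12\lVert\M{P}_\Omega\V{z} - \M{P}_\Omega\V{x}\rVert_2^2$. Using $\tr(\M{Z}\Tra\M{L}_r\M{Z}) = \V{z}\Tra(\M{I}_p\Kron\M{L}_r)\V{z}$ and $\tr(\M{Z}\M{L}_c\M{Z}\Tra) = \V{z}\Tra(\M{L}_c\Kron\M{I}_n)\V{z}$, the objective becomes $q(\V{z}) = \tfrac12\V{z}\Tra\M{Q}\V{z} - \V{b}\Tra\V{z} + \text{const}$, where $\V{b} = \M{P}_\Omega\V{x}$ and
\[
  \M{Q} \;=\; \M{P}_\Omega \;+\; \M{H},\qquad \M{H}\;:=\;\gamma_r(\M{I}_p\Kron\M{L}_r) + \gamma_c(\M{L}_c\Kron\M{I}_n).
\]
Each summand is symmetric positive semidefinite, so $\M{Q}\succeq 0$; moreover $\V{b}\in\operatorname{range}(\M{P}_\Omega)\subseteq\operatorname{range}(\M{Q})$, equivalently $\V{b}\perp\Ker(\M{Q})$.

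First I would settle existence and, simultaneously, the shape of the minimizer set. Setting $N := \Ker(\M{Q})$, a direct expansion gives $q(\V{z}+\V{n}) = q(\V{z}) - \V{b}\Tra\V{n}$ for all $\V{n}\in N$; since $\V{b}\perp N$, $q$ is constant on cosets of $N$ and hence descends to a function on $\Real^{np}/N\cong N^{\perp} = \operatorname{range}(\M{Q})$ that equals $\tfrac12$ a positive definite form plus an affine term. That descended function is strictly convex and coercive, so it attains a unique minimum, which lifts back to show $q$ attains its minimum over $\Real^{np}$ with minimizer set exactly a coset $\V{z}^{\star} + N$. This proves existence and reduces everything to deciding when $N = \{0\}$: if $N\neq\{0\}$ the minimizer set is a translate of a subspace of dimension $\ge 1$, hence infinite.

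Next I would compute $\Ker(\M{Q})$. Because $\Ker(\M{A}+\M{B}) = \Ker(\M{A})\cap\Ker(\M{B})$ for positive semidefinite $\M{A},\M{B}$, a vector $\V{z}$ lies in $N$ iff $\M{P}_\Omega\V{z} = \V{0}$ (i.e.\ $\M{Z}$ vanishes on $\Omega$) and $\M{H}\V{z} = \V{0}$. When $\gamma_r>0$ and $\gamma_c>0$, $\V{z}\Tra\M{H}\V{z} = 2J(\M{Z})\ge 0$, so $\M{H}\V{z} = \V{0}$ iff $J(\M{Z}) = 0$, which by \Prop{shrinkage} holds iff $\M{Z} = \sum_{r=1}^{R}\sum_{c=1}^{C}\mu_{rc}\V{\chi}_{A_r}\V{\chi}_{B_c}\Tra$. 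The key combinatorial point is that the rank-one matrices $\V{\chi}_{A_r}\V{\chi}_{B_c}\Tra$ have pairwise disjoint supports partitioning $\{1,\dots,n\}\times\{1,\dots,p\}$ into the checkerboard patches $A_r\times B_c$; hence such a $\M{Z}$ vanishes on $\Omega$ iff $\mu_{rc} = 0$ for every $(r,c)$ with $\Omega\cap(A_r\times B_c)\neq\emptyset$, so $N = \{\V{0}\}$ iff every patch meets $\Omega$ --- exactly \As{missingness}. Thus for $\gamma_r,\gamma_c>0$ the solution is unique iff \As{missingness} holds; and if \As{missingness} fails, an unobserved patch $(r_0,c_0)$ gives $\V{\chi}_{A_{r_0}}\V{\chi}_{B_{c_0}}\Tra\in N\setminus\{\V{0}\}$, so \Eqn{bmc} has infinitely many solutions. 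If instead $\gamma_r = 0$ (the case $\gamma_c = 0$ is symmetric), then $\Ker(\M{H})$ is the set of $\vec(\M{Z})$ with $\M{Z}\M{L}_c = \M{0}$, i.e.\ $\M{Z}$ constant within each column component, and intersecting with $\Ker(\M{P}_\Omega)$ still leaves a nonzero element in the matrix-completion regime (for instance $\V{e}_i\V{\chi}_{B_c}\Tra$ whenever $(\{i\}\times B_c)\cap\Omega = \emptyset$), so the solution is again non-unique.

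I expect the main obstacle to be the uniqueness bookkeeping: translating ``$\sum_{r,c}\mu_{rc}\V{\chi}_{A_r}\V{\chi}_{B_c}\Tra$ vanishes on $\Omega$'' into the per-patch condition of \As{missingness} via disjointness of the patch supports, and verifying that \emph{both} $\gamma_r$ and $\gamma_c$ must be positive --- when either vanishes, $\Ker(\M{H})$ swells from the span of the $RC$ checkerboard matrices to a large block-constant subspace, and one must check this enlargement survives the intersection with $\Ker(\M{P}_\Omega)$. By contrast, existence and the reduction of everything to $\Ker(\M{Q})$ are routine once the quadratic reformulation and the identity $\Ker(\M{A}+\M{B}) = \Ker(\M{A})\cap\Ker(\M{B})$ for positive semidefinite $\M{A},\M{B}$ are in place.
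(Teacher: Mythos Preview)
Your proposal is correct and follows essentially the same route as the paper: vectorize, identify the Hessian $\M{Q}$ (the paper calls it $\M{S}$), and reduce everything to $\Ker(\M{Q}) = \Ker(\M{P}_\Omega)\cap\Ker(\M{H})$ via the positive-semidefinite identity $\Ker(\M{A}+\M{B}) = \Ker(\M{A})\cap\Ker(\M{B})$. The only stylistic difference is in the existence step: the paper rewrites \Eqn{bmc} as an ordinary least squares problem by factoring $\M{L}_r = \M{\Phi}_r\Tra\M{\Phi}_r$ through edge-incidence matrices and then invokes the standard fact that a projection onto a column space always exists, whereas your quotient/coercivity argument on $N^\perp$ reaches the same conclusion without introducing $\M{\Phi}_r,\M{\Phi}_c$.

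One caveat on the $\gamma_r = 0$ case you attempt at the end: the witness $\V{e}_i\V{\chi}_{B_c}\Tra$ lies in $\Ker(\M{P}_\Omega)$ only when the block $\{i\}\times B_c$ is entirely unobserved, and the mere existence of missing entries does not guarantee such a block exists (e.g.\ if $\mathcal{G}_c$ is connected and every row has at least one observation). The paper's own proof does not address $\gamma_r = 0$ or $\gamma_c = 0$ at all---its \Lem{l_S_spd} explicitly assumes $\gamma_r,\gamma_c > 0$---so the ``only if'' direction with respect to positivity of the $\gamma$'s is slightly stronger than what either argument fully justifies without an additional hypothesis on $\Omega$.
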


The interpretation of this result is that there is a unique solution to the biclustered matrix completion problem if and only if no bicluster induced by the row and column graph Laplacians is completely missing. On the other hand, the prediction error for the reconstruction can be arbitrarily poor if \As{missingness} fails to hold.

In order for \Thm{existence} to be practical, however, we need a way to verify \As{missingness}. We provide an algorithm based on breadth-first-search that accomplishes this in time linear in the size of the data. Details are given in the \Sec{verify} of the Appendix. The next two results characterize the limiting behavior of $\M{Z}(\gamma_r, \gamma_c)$ as a function of the tuning parameters $(\gamma_r, \gamma_c)$.


%

Since $J(\M{Z})$ is shrinking estimates towards the checkerboard pattern induced by the clustering pattern in the row and column graphs, we intuitively expect that the estimate $\M{Z}(\gamma_r, \gamma_c)$ tends toward the solution of the following constrained optimization problem:
\begin{eqnarray}
\label{eq:constrained}
\M{Z}^\star = \underset{\M{z}}{\arg\min}\; \frac{1}{2} \lVert \mathcal{P}_\Omega (\M{z}) - \mathcal{P}_\Omega (\M{x}) \rVert_{\text{F}}^2,\quad \text{subject to}\quad \tr (\M{Z}\Tra\M{L}_r\M{Z}) = \tr (\M{Z} \M{L}_c \M{Z}\Tra) = 0.
\end{eqnarray}
Moreover, we anticipate that this limiting solution should be the result of averaging the observed entries over each checkerboard patch. This is indeed the case.
\begin{proposition}
\label{prop:constrained_solution}
If \As{missingness} holds, then the unique solution to \Eqn{constrained} is 
\begin{equation}
\label{eq:solution}
\M{Z}^\star \amp = \amp \sum_{r=1}^R\sum_{c=1}^C \mu_{rc}^*\V{\chi}_{A_r}\V{\chi}_{B_c}\Tra,
\end{equation}
where $\Omega_{rc} = \{ (i,j) \in \Omega : i \in A_r, j \in B_c \}$, and \begin{equation}
\label{eq:global}
\mu_{rc}^* \amp \equiv \amp \lvert \Omega_{rc} \rvert\Inv \sum_{(i,j) \in \Omega_{rc}} \ME{x}{ij}.
\end{equation}
\end{proposition}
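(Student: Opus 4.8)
The plan is to reduce the matrix problem \Eqn{constrained} to a family of decoupled scalar least-squares problems. First I would identify the feasible set. Since $\M{L}_r$ and $\M{L}_c$ are positive semidefinite, the constraints $\tr(\M{Z}\Tra\M{L}_r\M{Z}) = \sum_j \M{Z}_{\cdot j}\Tra\M{L}_r\M{Z}_{\cdot j} = 0$ and $\tr(\M{Z}\M{L}_c\M{Z}\Tra) = 0$ are sums of nonnegative terms, hence force every column of $\M{Z}$ to lie in $\Ker(\M{L}_r)$ and every row of $\M{Z}$ to lie in $\Ker(\M{L}_c)$; these null spaces are spanned by $\{\V{\chi}_{A_r}\}_{r=1}^R$ and $\{\V{\chi}_{B_c}\}_{c=1}^C$ respectively. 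Thus the feasible set is exactly the span $\{\sum_{r,c}\mu_{rc}\V{\chi}_{A_r}\V{\chi}_{B_c}\Tra : \mu_{rc}\in\Real\}$ appearing in \Prop{shrinkage}, and \Eqn{constrained} is equivalent to an unconstrained minimization over the coefficients $\{\mu_{rc}\}$.

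Next I would rewrite the objective in these coefficients. Because the connected components $A_1,\dots,A_R$ partition $\{1,\dots,n\}$ and $B_1,\dots,B_C$ partition $\{1,\dots,p\}$, the outer products $\V{\chi}_{A_r}\V{\chi}_{B_c}\Tra$ have pairwise disjoint supports, so for a feasible $\M{Z}$ the entry $\ME{z}{ij}$ equals $\mu_{rc}$ exactly when $i \in A_r$ and $j \in B_c$. Substituting into $\frac{1}{2}\lVert\mathcal{P}_\Omega(\M{Z}) - \mathcal{P}_\Omega(\M{X})\rVert_{\text{F}}^2$ and grouping the observed indices by patch $\Omega_{rc}$ gives
\begin{equation*}
\frac{1}{2} \sum_{r=1}^R\sum_{c=1}^C \sum_{(i,j)\in\Omega_{rc}} (\mu_{rc} - \ME{x}{ij})^2 ,
\end{equation*}
which separates over the pairs $(r,c)$.

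Then I would solve each scalar subproblem. Under \As{missingness} each $\Omega_{rc}$ is nonempty, so $\sum_{(i,j)\in\Omega_{rc}}(\mu_{rc}-\ME{x}{ij})^2$ is a strictly convex quadratic in $\mu_{rc}$ whose unique minimizer is the patch average $\mu_{rc}^\star = \lvert\Omega_{rc}\rvert\Inv\sum_{(i,j)\in\Omega_{rc}}\ME{x}{ij}$. Assembling these into $\M{Z}^\star = \sum_{r,c}\mu_{rc}^\star \V{\chi}_{A_r}\V{\chi}_{B_c}\Tra$ yields the claimed minimizer; since the map $\{\mu_{rc}\}\mapsto\M{Z}$ is injective on the feasible set and every subproblem has a unique solution, $\M{Z}^\star$ is the unique global minimizer of \Eqn{constrained}.

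I do not anticipate a real obstacle here. The only points needing care are establishing that the two trace constraints cut out precisely the span in \Prop{shrinkage} (immediate from positive semidefiniteness of the Laplacians and the disjoint-support structure of the component indicators) and invoking \As{missingness} in the right place: it is exactly what makes each scalar subproblem nondegenerate, and without it the coefficients over empty patches would be unconstrained — the same mechanism underlying the non-uniqueness asserted in \Thm{existence}.
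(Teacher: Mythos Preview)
Your proposal is correct and follows essentially the same route as the paper: identify the feasible set with the span of the patch indicators (the paper invokes \Prop{shrinkage} implicitly, you spell it out via the null spaces of the Laplacians), substitute to obtain a separable sum of scalar least-squares problems indexed by $(r,c)$, and use \As{missingness} to guarantee each is strictly convex with minimizer the patch average. The only cosmetic difference is that you articulate the feasible-set identification and the injectivity of $\{\mu_{rc}\}\mapsto\M{Z}$ more explicitly than the paper does.
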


The next result verifies our intuition that the estimate $\M{Z}(\gamma_r, \gamma_c)$ tends towards to $\M{Z}^\star$ in \Eqn{solution} as $\gamma_r$ and $\gamma_c$ tend towards infinity.

\begin{theorem}
\label{thm:limiting_solution}
If \As{missingness} holds, then $\M{z}(\gamma_r, \gamma_c)$ tends to $\M{Z}^\star$, defined in \Eqn{solution}, as the regularization parameters diverge to infinity, namely $\underline{\gamma} \equiv \min\{\gamma_r, \gamma_c\} \rightarrow \infty$.
\end{theorem}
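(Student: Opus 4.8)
The plan is the standard ``coercivity plus compactness'' route. Write $\M{Z}(\gamma_r,\gamma_c)$ for the BMC minimizer (unique by \Thm{existence}), set $\underline{\gamma}=\min\{\gamma_r,\gamma_c\}$, $\ell(\M{Z})=\tfrac12\lVert\mathcal{P}_\Omega(\M{X})-\mathcal{P}_\Omega(\M{Z})\rVert_{\text{F}}^2$, and $g(\M{Z})=\tfrac12\tr(\M{Z}\Tra\M{L}_r\M{Z})+\tfrac12\tr(\M{Z}\M{L}_c\M{Z}\Tra)\ge 0$, so that the BMC objective is $\ell(\M{Z})+\gamma_r\cdot\tfrac12\tr(\M{Z}\Tra\M{L}_r\M{Z})+\gamma_c\cdot\tfrac12\tr(\M{Z}\M{L}_c\M{Z}\Tra)$ and the feasible set of \Eqn{constrained} is exactly $\mathcal{S}:=\{\M{Z}:g(\M{Z})=0\}$, which by \Prop{shrinkage} is the linear span of the checkerboard matrices $\V{\chi}_{A_r}\V{\chi}_{B_c}\Tra$. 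I would first establish an a priori bound showing the family $\{\M{Z}(\gamma_r,\gamma_c)\}$ is bounded and asymptotically lands in $\mathcal{S}$, and then identify every subsequential limit with $\M{Z}^\star$ via \Prop{constrained_solution}.

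For the a priori bound: since $\M{Z}^\star\in\mathcal{S}$ we have $g(\M{Z}^\star)=0$, so the BMC objective at $\M{Z}^\star$ equals $\ell(\M{Z}^\star)$, and minimality of $\M{Z}(\gamma_r,\gamma_c)$ together with nonnegativity of the three summands forces $\ell(\M{Z}(\gamma_r,\gamma_c))\le\ell(\M{Z}^\star)$, $\tfrac12\tr(\M{Z}\Tra\M{L}_r\M{Z})\le\ell(\M{Z}^\star)/\gamma_r$ and $\tfrac12\tr(\M{Z}\M{L}_c\M{Z}\Tra)\le\ell(\M{Z}^\star)/\gamma_c$, hence $g(\M{Z}(\gamma_r,\gamma_c))\le 2\ell(\M{Z}^\star)/\underline{\gamma}$. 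Splitting $\M{Z}(\gamma_r,\gamma_c)=\M{W}^{\parallel}+\M{W}^{\perp}$ Frobenius-orthogonally along $\mathcal{S}\oplus\mathcal{S}^{\perp}$, I would use that $g$ is the PSD quadratic form $\vec(\M{Z})\Tra\big(\tfrac12(\M{I}_p\Kron\M{L}_r+\M{L}_c\Kron\M{I}_n)\big)\vec(\M{Z})$ whose null space is precisely $\mathcal{S}$ (a PSD Kronecker sum annihilates the common null space, and $\Ker\M{L}_r=\mathrm{span}\{\V{\chi}_{A_r}\}$, $\Ker\M{L}_c=\mathrm{span}\{\V{\chi}_{B_c}\}$); thus $\mathcal{S}$ and $\mathcal{S}^{\perp}$ are $g$-orthogonal, $g(\M{Z}(\gamma_r,\gamma_c))=g(\M{W}^{\perp})\ge\lambda_\star\lVert\M{W}^{\perp}\rVert_{\text{F}}^2$ for $\lambda_\star>0$ the smallest nonzero eigenvalue of that Kronecker sum, and $\M{W}^{\perp}\to\V{0}$. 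For the $\mathcal{S}$-component I would invoke \As{missingness} to see that $\mathcal{P}_\Omega$ is injective on $\mathcal{S}$ — a matrix $\sum\mu_{rc}\V{\chi}_{A_r}\V{\chi}_{B_c}\Tra$ equals $\mu_{rc}$ on the nonempty index set $\Omega_{rc}$, so vanishing on $\Omega$ forces all $\mu_{rc}=0$ — so on the finite-dimensional space $\mathcal{S}$ there is $c_0>0$ with $\lVert\mathcal{P}_\Omega(\M{W})\rVert_{\text{F}}\ge c_0\lVert\M{W}\rVert_{\text{F}}$; together with $\lVert\mathcal{P}_\Omega(\M{Z}(\gamma_r,\gamma_c))\rVert_{\text{F}}\le\lVert\mathcal{P}_\Omega(\M{X})\rVert_{\text{F}}+\sqrt{2\ell(\M{Z}^\star)}$ and $\M{W}^{\perp}\to\V{0}$ this bounds $\lVert\M{W}^{\parallel}\rVert_{\text{F}}$ uniformly, so the family is bounded.

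To finish, I would take any parameter sequence with $\underline{\gamma}\to\infty$, pass to a convergent subsequence $\M{Z}(\gamma_r,\gamma_c)\to\bar{\M{Z}}$, note $\bar{\M{Z}}\in\mathcal{S}$ (since $\M{W}^{\perp}\to\V{0}$) and $\ell(\bar{\M{Z}})\le\ell(\M{Z}^\star)$ (continuity of $\ell$ and $\ell(\M{Z}(\gamma_r,\gamma_c))\le\ell(\M{Z}^\star)$ throughout); since \Eqn{constrained} is the minimization of $\ell$ over $\mathcal{S}$ with unique solution $\M{Z}^\star$ by \Prop{constrained_solution}, any point of $\mathcal{S}$ attaining value at most $\min_{\mathcal{S}}\ell$ must equal $\M{Z}^\star$, so $\bar{\M{Z}}=\M{Z}^\star$; as every convergent subsequence has the same limit and the family is bounded, $\M{Z}(\gamma_r,\gamma_c)\to\M{Z}^\star$ as $\underline{\gamma}\to\infty$. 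The hard part will be the two linear-algebra facts underlying the bound — that $g$ dominates $\lVert\M{W}^{\perp}\rVert_{\text{F}}^2$ (equivalently, that the Kronecker-sum form has null space exactly $\mathcal{S}$) and that $\mathcal{P}_\Omega$ restricted to $\mathcal{S}$ is injective under \As{missingness}; the second is the more delicate, and I would isolate it as a short lemma. Since only convergence (no rate) is claimed, nothing beyond the $O(1/\underline{\gamma})$ control of $g$ is needed.
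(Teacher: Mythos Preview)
Your argument is correct and follows the same three-beat skeleton as the paper's proof: (i) show the family $\{\M{Z}(\gamma_r,\gamma_c)\}$ is bounded, (ii) show every subsequential limit is feasible for \Eqn{constrained} and has $\ell$-value at most $\ell(\M{Z}^\star)$, (iii) invoke uniqueness from \Prop{constrained_solution}. Steps (ii)--(iii) are essentially identical to the paper's, down to the comparison inequality $\ell(\M{Z}(\gamma_r,\gamma_c))+\underline{\gamma}\,g(\M{Z}(\gamma_r,\gamma_c))\le\ell(\M{Z}^\star)$.

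The genuine difference is in step (i). The paper obtains boundedness in one stroke by writing the solution as $\V{z}=\M{S}^{-1}\M{P}_\Omega\V{x}$ and noting that for $\underline{\gamma}\ge 1$ one has $\M{S}\succeq \M{P}_\Omega+\M{I}\Kron\M{L}_r+\M{L}_c\Kron\M{I}$, a fixed positive definite matrix (positive definiteness is exactly \Thm{existence} under \As{missingness}), so $\lVert\M{S}^{-1}\rVert_2$ is uniformly bounded. You instead split $\M{Z}(\gamma_r,\gamma_c)=\M{W}^\parallel+\M{W}^\perp$ along $\mathcal{S}\oplus\mathcal{S}^\perp$ and bound each piece: the $\mathcal{S}^\perp$-part via the spectral gap of the Kronecker sum, and the $\mathcal{S}$-part via injectivity of $\mathcal{P}_\Omega$ on $\mathcal{S}$ under \As{missingness}. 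Your route is longer but more informative --- it makes transparent exactly where \As{missingness} enters (only to control $\M{W}^\parallel$), and it yields the quantitative byproduct $\lVert\M{W}^\perp\rVert_{\text F}^2=O(1/\underline{\gamma})$ that the paper's operator-norm bound does not. The paper's route is shorter precisely because it bundles your two linear-algebra lemmas into the single statement ``$\M{P}_\Omega+\M{I}\Kron\M{L}_r+\M{L}_c\Kron\M{I}$ is positive definite,'' which is already proved in \Thm{existence}. Either argument is perfectly adequate here since no rate is claimed.
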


\section{Estimation}
\label{sec:estimation}

In this section, we discuss how to solve the estimation problem for a fixed set of parameters $(\gamma_r, \gamma_c)$ in order to quantify the amount of work a standard grid-search method would incur.
 It is easier to work with vectorized quantities. Let $\V{x} \equiv \vec(\M{X})$, namely $\V{x}$ is the vector obtained by stacking the columns of $\M{X}$ on top of each other.  Then the objective in \Eqn{bmc} can be written as
\begin{eqnarray}
\label{eq:smooth_bicluster_vec}
&\frac{1}{2} \lVert \M{P}_\Omega \V{X} - \M{P}_\Omega \V{Z} \rVert_{2}^2 + \frac{\gamma_r}{2} \V{z}\Tra (\M{I}\Kron\M{L}_r)\V{z} + \frac{\gamma_c}{2} \V{Z}\Tra(\M{L}_c\Kron\M{I})\V{z},
\end{eqnarray}
where the binary operator $\Kron$ denotes the Kronecker product and $\M{P}_\Omega \in \{0,1\}^{np \times np}$ is a diagonal matrix with a 1 in the $k$th diagonal entry if the $k$th entry in the matrix (column major ordering) is observed and 0 otherwise. More explicitly, if $(i,j) \in \Omega$, then $\M{P}_\Omega(k,k) = 1$ where $k = i + n(j-1)$. We have rewritten the two penalty expressions in terms of $\V{z}$ by invoking the identity $\vec(\M{A}\M{B}\M{C}) = (\M{C}\Tra \Kron \M{A})\V{b}$.

Since the objective function in \Eqn{smooth_bicluster_vec} is differentiable and convex, we seek the vector $\V{z}$ at which the gradient of the objective vanishes. Thus, the estimate $\V{Z}$ is the solution to the following linear system obtained by setting the gradient equal to zero,
\begin{eqnarray*}
\label{eq:smooth_lin}
\M{S}\V{z} & = & \M{P}_\Omega \V{x},
\end{eqnarray*}
where $\M{S} = \M{P}_\Omega + \gamma_r (\M{I} \Kron \M{L}_r) + \gamma_c (\M{L}_c \Kron \M{I})$. Note that under \As{missingness}, this linear system is invertible and therefore $\V{z} = \M{S}\Inv \M{P}_\Omega\V{x}$. 

\paragraph{Sparse Weights}

Recall that $\M{S}$ is $np$-by-$np$. If the majority of the row and column weights are positive, the resulting Laplacian matrices will be dense and solving this $np$-by-$np$ linear system will take a demanding $\mathcal{O}( (np)^3)$ operations. On the other hand, if most of the row and column weights are zero, namely the weights are sparse, then we can solve the linear system in substantially less time as discussed below. Fortunately, it is possible to construct sparse approximations to the weights graphs that lead to Laplacian regularized solutions that are close to the solutions one would obtain using the original dense Laplacian regularizers \citep{Sadhanala2016}. Unless stated otherwise, for the rest of the paper we will assume that the weights are sparse. In particular, we assume the number of positive weights is linear in the size of the graph. This can be achieved using a $k$-nearest neighbors (knn) sparsification strategy described in \Sec{radiogenomics}.

Finally, we emphasize that there is a tension between minimizing computational costs and ensuring reliable estimation. If the weights are too sparse, \As{missingness} can fail to hold and there may not be a unique solution to the BMC problem \Eqn{bmc}. In practice, the knn sparsification for $k\sim10$ strikes a reasonable balance between these two goals. Again, we emphasize that we can easily check \As{missingness} to expedite the identification of a good sparsity level.

\paragraph{Computational Complexity} Since $\M{S}$ is symmetric and sparse, the solution to $\M{S}\V{z} = \M{P}_\Omega\V{x}$ can be computed using either a direct solver, such as the sparse Cholesky factorization of $\M{S}$, or an iterative solver, such as the preconditioned conjugate gradient.

In the direct approach, a triangular factorization of $\M{S}$ is computed and then forward and backward substitution are performed on two triangular systems to obtain the solution. The exact computational complexity of a sparse direct solver depends on the underlying sparsity pattern; however, a (knn) sparsity pattern in the matrix $\M{S}$ is similar to the discretization of elliptic partial differential equations in two spatial dimensions. For these problems, the computational complexity for solving a linear system involving $\M{S}$, which is of size $np$-by-$np$, requires $\mathcal{O}( (np)^{3/2})$ flops~\citep{george1973nested}. 

On the other hand, if the problem at hand is very large, factorization methods may not be feasible. We recommend that the linear system $\M{S}\V{z} = \M{P}_\Omega \V{x}$ be solved using Preconditioned Conjugate Gradient method, with the Incomplete Cholesky Factorization method as a preconditioner~\cite[Section 10.3]{golub2012matrix}. In numerical experiments, we have successfully used this approach for large-scale problems, which is directly available via MATLAB. 

\section{Model Selection}
\label{sec:model_selection}

We now address the issue of choosing the penalty parameters. We seek the parameters $(\gamma_r, \gamma_c)$ that result in the model with the best prediction error. There are two general approaches to estimating this prediction error: covariance penalties that are analytic and sampling methods that are non-parametric. The former includes methods such as Mallow's $C_p$, Akaike's information criterion (AIC), the Bayesian information criterion (BIC) \citep{Schwarz1978}, generalized cross-validation (GCV) \citep{Craven1978,Golub1979}, and Stein's unbiased risk estimate (SURE). The latter includes methods such as cross-validation (CV) and the bootstrap. Interested readers may consult \cite{Efron2004} for an in depth discussion on the relationship between the two approaches. In this article, we will use the BIC and compare it to CV since each are widely used in practice.
All proofs are in \Sec{proof} of the Appendix.

%

To derive the BIC for BMC, we first need to compute the degrees of freedom for BMC.  In general, the degrees of freedom of a model quantifies its flexibility. To derive the degrees of freedom for a model, we need a probabilistic model for the data generating process.
Suppose we observe noisy measurements of a parameter $\V{\mu}$, namely $\V{x} \in \Real^m$ where $\V{x} = \V{\mu} + \V{\varepsilon}$ and $\varepsilon_i$ are uncorrelated random variables with zero mean and common variance $\sigma^2$. Let $\Vhat{x}$ denote an estimate of $\V{\mu}$. Then the degrees of freedom of $\Vhat{x}$ is given by
\begin{eqnarray}
\label{eq:degrees_of_freedom}
\text{df} & = & \frac{1}{\sigma^2} \sum_{i=1}^m \text{Cov}(\hat{x}_{i},\VE{x}{i}).
\end{eqnarray}
In the case of BMC, this is a straightforward calculation. We assume the vectorization of our data, $\V{X} \in \Real^{np}$, is given by $\V{x} = \V{\mu} + \V{\varepsilon}$, where the elements of $\V{\varepsilon}$ are uncorrelated errors with zero mean and common variance $\sigma^2$. In the BMC problem, the vector $\V{z} = \M{S}\Inv\M{P}_\Omega\V{x}$ plays the role of $\Vhat{x}$ in the formula \Eqn{degrees_of_freedom}. Note that the degrees of freedom calculation does not require that the mean vector $\V{\mu}$ to have a checkerboard structure.

\begin{proposition}
\label{prop:dof}
The degrees of freedom of the BMC estimate is given by $\tr(\M{S}\Inv)$,
where
\begin{eqnarray*}
 \M{S} & \equiv & \M{P}_\Omega  + \gamma_r (\M{I} \Kron \M{L}_r) + \gamma_c (\M{L}_c \Kron \M{I}).
\end{eqnarray*}
\end{proposition}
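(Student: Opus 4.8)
The plan is to recognize the BMC fit as a \emph{linear smoother} and then invoke the standard fact that the degrees of freedom of a linear smoother is the trace of its smoothing matrix. First I would record that, under \As{missingness}, the matrix $\M{S}$ is invertible --- indeed symmetric positive definite --- so that, as derived in \Sec{estimation} and consistent with \Thm{existence}, the fitted vector is $\V{z} = \M{S}\Inv\M{P}_\Omega\V{x}$. Writing $\V{y} \equiv \M{P}_\Omega\V{x}$ for the (zero-filled) observation vector that actually enters the fit, this reads $\V{z} = \M{S}\Inv\V{y}$: the estimate is a linear function of the data, with smoothing matrix $\M{S}\Inv$.

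Next I would substitute this into the covariance-penalty definition \Eqn{degrees_of_freedom}. For a generic linear estimator $\Vhat{x} = \M{H}\V{y}$ of a mean vector, with data obeying $\V{y} = \V{\mu} + \V{\varepsilon}$ where the $\varepsilon_i$ are uncorrelated with common variance $\sigma^2$, linearity of covariance gives $\mathrm{Cov}(\hat{x}_i, y_i) = \sum_k \M{H}_{ik}\,\mathrm{Cov}(\varepsilon_k,\varepsilon_i) = \sigma^2 \M{H}_{ii}$, so that $\sum_i \mathrm{Cov}(\hat{x}_i, y_i) = \sigma^2\,\tr(\M{H})$ and hence $\mathrm{df} = \tr(\M{H})$. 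Applying this with $\M{H} = \M{S}\Inv$ yields $\mathrm{df} = \tr(\M{S}\Inv)$, which is the asserted formula; the right-hand side is well defined precisely because \As{missingness} guarantees that $\M{S}$ is invertible (if \As{missingness} fails, the indicator outer products spanning the joint null space of $\M{I}\Kron\M{L}_r$ and $\M{L}_c\Kron\M{I}$ that are annihilated by $\M{P}_\Omega$ make $\M{S}$ singular, mirroring \Thm{existence}).

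The computation itself is short, and I do not expect a serious obstacle; the one place requiring care --- and thus the main "step to get right" --- is the bookkeeping around the projector $\M{P}_\Omega$: being explicit that the noise relevant to \Eqn{degrees_of_freedom} is exactly that carried by the entries which drive $\V{z}$, and tracking the idempotent diagonal matrix $\M{P}_\Omega$ through the covariance algebra so that the cross-covariance collapses cleanly to a trace. Everything else is linearity of the estimator together with elementary properties of the trace.
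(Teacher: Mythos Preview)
Your approach is essentially the same as the paper's: both recognize that $\V{z}=\M{S}\Inv\M{P}_\Omega\V{x}$ is a linear smoother and then plug into the covariance-penalty definition \Eqn{degrees_of_freedom}, using that for a linear map $\Vhat{x}=\M{A}\V{x}$ one has $\text{Cov}(\hat x_i,x_i)=a_{ii}\sigma^2$ and hence $\text{df}=\tr(\M{A})$. The paper's write-up is terser---it does the generic $\M{A}$ calculation and then specializes to $\M{A}=\M{S}\Inv\M{P}_\Omega$ in one line---whereas you route through $\V{y}=\M{P}_\Omega\V{x}$ and $\M{H}=\M{S}\Inv$; but the substance is identical, and your explicit flag about tracking $\M{P}_\Omega$ through the covariance algebra is exactly the bookkeeping the paper leaves implicit.
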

The degrees of freedom for BMC has several intuitive properties. As $\underline{\gamma}$ diverges, the degrees of freedom decreases monotonically to the degrees of freedom of $\M{Z}^\star$, defined in \Eqn{solution}, namely $RC$ where $R$ and $C$ are the number of connected components of $\mathcal{G}_r$ and $\mathcal{G}_c$ respectively.

\begin{proposition}
\label{prop:dof_monotonicity}
If \As{missingness} holds, then the degrees of freedom possesses the following properties: (i) $\tr(\M{S}\Inv(\V{\gamma})) \geq RC$ for all $\V{\gamma}$ positive, (ii) $\tr(\M{S}\Inv(\V{\gamma})) \geq \tr(\M{S}\Inv(\V{\gamma}'))$ whenever $\gamma'_r \geq \gamma_r$ and $\gamma'_c \geq \gamma_c$, and (iii) $\tr(\M{S}\Inv(\Vn{\gamma}{k})) \rightarrow RC$ for any sequence $\Vn{\gamma}{k}$ such that $\underline{\gamma}^{(k)} \rightarrow \infty$.
\end{proposition}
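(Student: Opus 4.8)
The plan is to treat all three parts as statements about the spectrum of $\M{S}(\V{\gamma}) = \M{P}_\Omega + \M{M}(\V{\gamma})$, writing $\M{M}(\V{\gamma}) \equiv \gamma_r(\M{I}\Kron\M{L}_r) + \gamma_c(\M{L}_c\Kron\M{I})$ for the symmetric positive semidefinite ``smoothing'' part, so that $\M{S}(\V{\gamma})$ is the bounded perturbation, $\M{0}\preceq\M{P}_\Omega\preceq\M{I}$, of $\M{M}(\V{\gamma})$. The vectorized form of \Prop{shrinkage} (using $J(\M{Z}) = \tfrac{1}{2}\V{z}\Tra\M{M}(\V{\gamma})\V{z}$ together with $\vec(\M{A}\M{B}\M{C}) = (\M{C}\Tra\Kron\M{A})\vec(\M{B})$) shows that for every positive $\V{\gamma}$ the null space $\mathcal{N}\equiv\Ker(\M{M}(\V{\gamma}))$ is the span of the $RC$ disjointly supported vectors $\vec(\V{\chi}_{A_r}\V{\chi}_{B_c}\Tra)$, so $\mathcal{N}$ is independent of $\V{\gamma}$ and $\dim\mathcal{N} = RC$. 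Under \As{missingness}, \Thm{existence} makes $\M{S}(\V{\gamma})$ positive definite, so its eigenvalues $\lambda_1(\V{\gamma})\le\cdots\le\lambda_{np}(\V{\gamma})$ are positive and $\tr(\M{S}(\V{\gamma})\Inv) = \sum_k\lambda_k(\V{\gamma})\Inv$. For part (i) I would note that every $\V{v}\in\mathcal{N}$ satisfies $\V{v}\Tra\M{S}(\V{\gamma})\V{v} = \V{v}\Tra\M{P}_\Omega\V{v}\le\lVert\V{v}\rVert_2^2$, so the presence of an $RC$-dimensional subspace on which the Rayleigh quotient of $\M{S}(\V{\gamma})$ is at most $1$ forces, by the Courant--Fischer min--max characterization, $\lambda_k(\V{\gamma})\le 1$ for $k\le RC$; then $\lambda_k(\V{\gamma})\Inv\ge 1$ for those indices and, dropping the remaining positive terms, $\tr(\M{S}(\V{\gamma})\Inv)\ge RC$.

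For part (ii), $\gamma_r'\ge\gamma_r$ and $\gamma_c'\ge\gamma_c$ give $\M{S}(\V{\gamma}') - \M{S}(\V{\gamma}) = (\gamma_r' - \gamma_r)(\M{I}\Kron\M{L}_r) + (\gamma_c' - \gamma_c)(\M{L}_c\Kron\M{I})\succeq\M{0}$, since the graph Laplacians and their Kronecker products with the identity are positive semidefinite; hence $\M{S}(\V{\gamma}')\succeq\M{S}(\V{\gamma})\succ\M{0}$, and as inversion is operator antitone on the positive definite cone, $\M{S}(\V{\gamma}')\Inv\preceq\M{S}(\V{\gamma})\Inv$, so taking traces yields $\tr(\M{S}(\V{\gamma})\Inv)\ge\tr(\M{S}(\V{\gamma}')\Inv)$. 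For part (iii), since $\underline{\gamma}^{(k)}\to\infty$ forces both $\gamma_r^{(k)}\to\infty$ and $\gamma_c^{(k)}\to\infty$, I would use $\M{M}(\V{\gamma}^{(k)})\succeq\underline{\gamma}^{(k)}\M{N}$ with $\M{N}\equiv\M{I}\Kron\M{L}_r + \M{L}_c\Kron\M{I}$ (which also has $\Ker(\M{N}) = \mathcal{N}$, by the same positive-semidefiniteness argument) together with Weyl's inequalities to obtain $\lambda_k(\V{\gamma}^{(k)})\ge\lambda_k(\M{M}(\V{\gamma}^{(k)}))\ge\underline{\gamma}^{(k)}\delta$ for $k > RC$, where $\delta > 0$ is the smallest nonzero eigenvalue of the fixed matrix $\M{N}$. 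Consequently the tail $\sum_{k>RC}\lambda_k(\V{\gamma}^{(k)})\Inv$ tends to $0$, and it remains only to show that the $RC$ eigen-directions lying in $\mathcal{N}$ contribute exactly $RC$ in the limit.

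For that last step, rather than track the surviving eigenvalues through the singular perturbation directly, I would re-express the trace through the degrees of freedom. Let $\M{H}(\V{\gamma})\equiv\M{S}(\V{\gamma})\Inv\M{P}_\Omega$ be the linear ``hat'' map, so $\V{z}(\V{\gamma}) = \M{H}(\V{\gamma})\V{x}$; by \Prop{dof}, $\tr(\M{S}(\V{\gamma})\Inv)$ is the degrees of freedom of $\V{z}(\V{\gamma})$, which for a linear smoother is the trace of its hat matrix, so $\tr(\M{S}(\V{\gamma})\Inv) = \tr(\M{H}(\V{\gamma}))$. By \Thm{limiting_solution}, $\V{z}(\V{\gamma})\to\vec(\M{Z}^\star)$ as $\underline{\gamma}\to\infty$ for every data vector $\V{x}$, and by \Prop{constrained_solution} the limiting estimator is the bicluster-averaging map $\M{H}^\star$ determined by \Eqn{solution}--\Eqn{global}; applying this with the data vector equal to each standard basis vector in turn upgrades the pointwise limit to entrywise matrix convergence $\M{H}(\V{\gamma}^{(k)})\to\M{H}^\star$, whence $\tr(\M{H}(\V{\gamma}^{(k)}))\to\tr(\M{H}^\star)$. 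Finally, the diagonal entry of $\M{H}^\star$ at a coordinate $(i,j)$ with $i\in A_r$ and $j\in B_c$ is $\lvert\Omega_{rc}\rvert\Inv$ when $(i,j)\in\Omega$ and $0$ otherwise, so $\tr(\M{H}^\star) = \sum_{r=1}^{R}\sum_{c=1}^{C}\sum_{(i,j)\in\Omega_{rc}}\lvert\Omega_{rc}\rvert\Inv = \sum_{r=1}^{R}\sum_{c=1}^{C}1 = RC$, which gives (iii). I expect the main obstacle to be this final identification: the $RC$ eigenvalues that survive the limit need not each tend to $1$, so the value of the limiting trace is not transparent from the spectral picture alone, and it is \Thm{limiting_solution} together with \Prop{constrained_solution} --- which pin the limiting operator down as the averaging map whose trace counts exactly one per bicluster --- that makes the limit equal $RC$.
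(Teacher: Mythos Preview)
Your arguments for (i) and (ii) are correct. For (i) you use Courant--Fischer on the $RC$-dimensional null space $\mathcal{N}$, whereas the paper compares $\M{S}$ with $\M{A}\equiv\M{I}+\M{M}(\V{\gamma})$ (for which the vectors $\V{\chi}_{B_c}\Kron\V{\chi}_{A_r}$ are exact eigenvectors with eigenvalue $1$) and then uses $\M{S}\Inv\succeq\M{A}\Inv$; both routes give the bound equally cleanly. Part (ii) is identical to the paper's.

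For (iii), your final route through the hat operator is essentially the paper's argument in different dress: the paper passes to the limit inside the covariance sum $\sigma^{-2}\sum_l\text{Cov}(z_l^{(k)},x_l)$ via dominated convergence, while you obtain entrywise convergence $\M{H}(\V{\gamma}^{(k)})\to\M{H}^\star$ by feeding standard basis vectors into \Thm{limiting_solution}; your computation $\tr(\M{H}^\star)=RC$ is exactly the paper's endpoint. However, the identity $\tr(\M{S}(\V{\gamma})\Inv)=\tr(\M{H}(\V{\gamma}))$ that you extract from \Prop{dof} is not actually valid. Since $\M{H}(\V{\gamma})=\M{S}(\V{\gamma})\Inv\M{P}_\Omega$ with $\M{P}_\Omega$ diagonal, $\tr(\M{H}(\V{\gamma}))=\sum_{l\in\Omega}[\M{S}\Inv]_{ll}$, whereas $\tr(\M{S}\Inv)$ sums all $np$ strictly positive diagonal entries; the two differ whenever entries are missing. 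Your own spectral setup foreshadows the problem: the tail $\sum_{k>RC}\lambda_k\Inv$ does vanish, but the $RC$ small eigenvalues of $\M{S}$ converge to the eigenvalues of the compression of $\M{P}_\Omega$ to $\mathcal{N}$, and these are of the form $\V{u}\Tra\M{P}_\Omega\V{u}\le\lVert\V{u}\rVert_2^2$ with strict inequality unless the corresponding bicluster is fully observed, so $\sum_{k\le RC}\lambda_k\Inv$ generally exceeds $RC$ in the limit. (For a $2\times 2$ matrix with connected row and column graphs and a single observed entry, $\tr(\M{S}\Inv)\to 4$ while $RC=1$.) The paper's own proof makes the same identification in its last displayed equation, so what both arguments in fact establish is $\tr(\M{S}\Inv\M{P}_\Omega)\to RC$, which is the degrees-of-freedom statement one actually wants.
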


The BIC for BMC is given by the following expression.
\begin{eqnarray*}
\bic(\gamma_r,\gamma_c) & = & \lvert \Omega \rvert\log \left (\Vert\mathcal{P}_{\Omega}(\M{x}) - \mathcal{P}_{\Omega}(\M{z}) \Vert_{\text{F}}^2 \right ) + \log(\lvert \Omega \rvert)\text{df}. 
\end{eqnarray*} 
We can re-express the BIC in terms of $\M{S}$ to make the dependence of the BIC on $\gamma_r$ and $\gamma_c$ more explicit.
\begin{eqnarray*}
\bic(\gamma_r,\gamma_c) & = & \lvert \Omega \rvert\log \left (\Vert\M{P}_{\Omega}\V{x} - \M{P}_{\Omega}\M{S}\Inv\V{x}) \Vert_{\text{F}}^2 \right ) + \log(\lvert \Omega \rvert)\tr(\M{S}\Inv).
\end{eqnarray*} 

A naive approach fits models over a grid of $(\gamma_r, \gamma_c)$ values and chooses the pair of regularization parameters that minimize the BIC. But this approach does not leverage the differentiability of the BIC. Since the BIC is differentiable, we compute the gradient with respect to $\V{\gamma}$ and employ gradient descent. Let $\V{z}$ denote the solution of the system $ \M{S}(\gamma_r, \gamma_c)\V{z} = \M{P}_\Omega \V{x}$ and define the residual $\V{r} \equiv \M{P}_\Omega(\V{z}-\V{x})$. Then the gradient is given by the following equation.
\begin{eqnarray}
\label{eq:gradient}
\nabla \bic(\gamma_r, \gamma_c) & = &
\begin{pmatrix}
-\frac{2\lvert \Omega \rvert}{\lVert \M{P}_\Omega \V{r} \rVert^2}\V{x}\Tra\M{P}_\Omega \M{S}_r\M{P}_\Omega \V{r} - \log(\lvert \Omega \rvert) \tr( \M{S}_r)\\
-\frac{2\lvert \Omega \rvert}{\lVert \M{P}_\Omega \V{r} \rVert^2}\V{x}\Tra\M{P}_\Omega \M{S}_c\M{P}_\Omega \V{r} - \log(\lvert \Omega \rvert) \tr(\M{S}_c)
\end{pmatrix},
\end{eqnarray}
where $\M{S}_r \equiv \M{S}\Inv  (\M{I}\Kron \M{L}_r) \M{S}\Inv$ and $\M{S}_c \equiv \M{S}\Inv  (\M{L}_c\Kron \M{I}) \M{S}\Inv$. We provide a derivation in the Supplementary Materials. 
With the gradient in hand, for little additional cost we may employ accelerated first order methods, for example SparSa \citep{Wright2008} or FISTA \citep{Beck2009}, or a Quasi-Newton method \citep[Ch 6]{Nocedal2006}. In this article, we apply the Quasi-Newton method. Note that with a trivial modification we can extend the IMS strategy to iteratively minimizing the AIC.

\paragraph{Computational Complexity} The key cost in evaluating the BIC is computing the trace of $\M{S}^{-1}$ and for \Eqn{gradient} the key cost is computing the trace of $\M{S}_c$ and $\M{S}_r$. Here, one can no longer take advantage of the sparsity of the system, since the inverse of $\M{S}$ is dense and therefore the computational complexity of computing the inverse is now $\mathcal{O}((np)^3)$. The same is also true for computing $\M{S}_c^{-1}$ and $\M{S}_r^{-1}$. Moreover, storing a dense matrix is infeasible when $n$ and $p$ are large.
 
Grid based model selection using a BIC or AIC will require $\mathcal{O}(N_rN_c(np)^{3})$ work, as each grid point requires trace of the inverse of an $np$-by-$np$ matrix which can be prohibitively expensive for large-scale problems of interest. We tackle this computational challenge with a two-pronged strategy: (i) we approximate the trace computation using a Monte Carlo method, to reduce the cost of each objective function evaluation, and (ii) we use an optimization approach for minimizing the BIC. 



\section{Scaling up Iterative Model Selection}
\label{sec:monte_carlo}

In this section, we briefly review Monte Carlo estimators for computing the trace of a matrix and discuss the computational costs associated with it. Next, we express the BIC minimization problem as a stochastic programming problem; we then approximate the objective function using a Monte Carlo estimator. This is called the Sample Average Approximation (SAA) method. 

\subsection{Monte Carlo Trace Estimator}

In the application at hand, $\M{S}$ is a large {\em sparse} matrix; however, $\M{S}\Inv$ is a large {\em dense} matrix. Consequently, forming $\M{S}\Inv$ explicitly is not advisable. We turn to a matrix-free approach for estimating the trace. \citet{hutchinson1989stochastic} introduced an unbiased estimator for the trace of a positive semidefinite matrix $\M{M} \in \mathbb{R}^{m\times m}$,
\begin{eqnarray}
\label{eq:Hutchinson}
\tr(\M{M}) &  = & \mathbb{E}_{\V{w}} [\V{w}\Tra\M{M}\V{w}] \amp \approx \amp \frac{1}{N}\sum_{k=1}^{N} \V{w}_k\Tra\M{M}\V{w}_k \,.
\end{eqnarray}
Here $\V{w}_k \in \Real^m$ are i.i.d.\ samples from a Rademacher distribution, i.e.\@, $\V{w}_k$ has entries $\pm 1$ with equal probability.  Other choices for distributions have been proposed \citep{avron2011randomized}; in general, $\V{w}_k$ must have zero mean and identity covariance. Examples of alternative distributions include Gaussian random vectors and random samples from columns of unitary matrices \citep{avron2011randomized,roosta2015improved}.
 
The quality of the Monte Carlo estimator given in \Eqn{Hutchinson} can be judged by two different metrics -- variance and the number of samples for an $(\epsilon, \delta)$-estimator.  \cite{hutchinson1989stochastic} showed that the variance of the estimator given in \Eqn{Hutchinson} is $2\left(\lVert \M{M} \rVert_{\text{F}}^2 - \sum_{j=1}^{n}\ME{M}{jj}^2\right)$. Therefore, if the off-diagonal entries of the matrix are large compared to its diagonal entries, the variance of the Hutchinson estimator can be quite large. An estimator is called an $(\epsilon,\delta)$-estimator (for $\epsilon > 0$ and $\delta < 1$) if the relative error of the trace is less than $\epsilon$ with probability at least $1 - \delta$.~\cite{roosta2015improved} showed that the minimum number of samples from a Rademacher distribution for an $(\epsilon,\delta)$-estimator is $6\epsilon^{-2}\log(2/\delta)$. This result implies that for an accurate estimator (i.e.\@, small $\epsilon)$, many samples are required ($\propto \epsilon^{-2}$). However, as we will demonstrate in the numerical experiments, a modest number of samples $\lesssim 10$ are sufficient for our purpose.   

Employing the Hutchinson estimator requires repeated evaluation of quadratic forms $\V{w}_k\Tra\M{S}\Inv\V{w}_k$, in order to estimate the $\tr(\M{S}\Inv)$. The inverse $\M{S}\Inv$ need not be computed explicitly; instead, we first solve $\M{S}\V{z}_k = \V{w}_k$, and then compute the inner product $\V{w}_k\Tra\V{z}_k$. Methods to solve linear system are discussed in Section~\ref{sec:estimation}. 

\subsection{Sample Average Approximation (SAA) method}
\label{eq:SAA}
Following~\cite{Anitescu2012}, we substitute the exact trace operation in the objective function with its Hutchinson estimate, to obtain a stochastic programming problem. Define
 \begin{eqnarray*}
f(\V{\gamma}, \V{w}) & \equiv & \lvert \Omega \rvert \log \left (\Vert\M{P}_{\Omega}(\M{I} - \M{S}\Inv)\M{P}_{\Omega}\V{x} \Vert_2^2 \right) + \log(\lvert \Omega \rvert)\ \V{w}\Tra\M{S}\Inv\V{w}.
\end{eqnarray*}
Therefore, the minimizer of the BIC is equivalent to the solution of the optimization problem  
$\min_{\V{\gamma}}\> \mathbb{E}_{\V{w}} [f(\V{\gamma}, \V{w})]$.

The SAA to the objective function gives an unbiased estimator of the $\bic$ function 
\begin{equation}
\label{eq:bic_saa}
\begin{split}
\widehat{\bic}(\V{\gamma},\V{w}_1, \ldots, \V{w}_N) & = \frac{1}{N} \sum_{i=1}^N f(\V{\gamma}, \V{w}_i) \\
& = \lvert \Omega \rvert \log \left (\Vert\M{P}_{\Omega}(\M{I} - \M{S}\Inv)\M{P}_{\Omega}\V{x} \Vert_2^2 \right) + \log(\lvert \Omega \rvert)\frac{1}{N}\sum_{i=1}^N \V{w}\Tra_i\M{S}\Inv \V{w}_i,
\end{split}
\end{equation} 
where $\V{w}_i$ are i.i.d.\ random vectors in $\Real^{np}$ with mean zero and identity covariance. Following the previous subsection, we choose $\V{w}_i$ as Rademacher random matrix.  
As before, we compute the gradient of \Eqn{bic_saa}. 

A natural question, is what is the relationship between the regularization parameters $\V{\gamma}_N$ obtained by minimizing the approximate $\widehat{\bic}$ to regularization parameters $\V{\gamma}^\star$ obtained by minimizing the true \bic. Let $\V{\gamma}^\star$ denote a stationary point of $\bic(\V{\gamma})$, namely $\nabla \bic(\V{\gamma}^\star) = \V{0}$ and suppose further that the Hessian of the BIC at $\V{\gamma}^\star$ is nonsingular, namely $\nabla^2 \bic(\V{\gamma}^\star)$ is nonsingular. Let $\V{\gamma}_N$ denote a stationary point to the unbiased estimate of the \bic, namely
\begin{eqnarray*}
\nabla_\V{\gamma} \widehat{\bic}(\V{\gamma}_N(\omega),\V{w}_1(\omega), \ldots, \V{w}_N(\omega)) & = & \V{0},
\end{eqnarray*}
where we have made explicit the dependency of $\V{\gamma}_N$ on the outcome $\omega$ to emphasize that $\V{\gamma}_N$ is a random variable. Then with the assumptions of~\citet[Theorem 5.14]{shapiro2014lectures}  with probability 1, the stationary point of $\widehat{\bic}$ is locally unique in a neighborhood of $\V{\gamma}^\star$ for $N$ sufficiently large. 

Furthermore, define
\begin{eqnarray*}
\M{J}_N & = & \frac{1}{N}\sum_{i=1}^N \nabla^2_{\V{\gamma}} f(\V{\gamma}_k, \V{w}_i), \\
\M{\Sigma}_N & = & \frac{1}{N}\sum_{i=1}^N \nabla_{\V{\gamma}} f(\V{\gamma}_N, \V{w}_i) \nabla_{\V{\gamma}} f(\V{\gamma}_N, \V{w}_i) \Tra,
\end{eqnarray*}
and the empirical covariance matrix $\M{V}_N  =  \M{J}_N\Inv \M{\Sigma}_N \M{J}_N\Inv$.
From~\citep[Theorem 5.14]{shapiro2014lectures}, it can be shown that ${N}^\frac{1}{2}\M{V}^{-\frac{1}{2}}_N(\V{\gamma}_N - \V{\gamma}^\star) \overset{\mathcal{D}}{\rightarrow} \mathcal{N}(\V{0}, \M{I}).$ 

\subsection{Summary of Computational Costs for Model Selection}

We now compare and contrast the computational costs of the various strategies for performing model selection in the BMC problem.
Direct grid search to obtain model selection parameters using $K$-fold cross-validation is expensive both in terms of the per function evaluation $\mathcal{O}(K(np)^{3/2})$ and number of function evaluations $N_rN_c$ which amounts to a total computational cost of $\mathcal{O}(KN_rN_c(np)^{3/2})$. The cost for direct grid search using the BIC is even worse; evaluating the BIC function over $N_r N_c$ grid points costs $\mathcal{O}(N_rN_c(np)^3)$. Using the Hutchinson approximation, however, substantially lowers the cost of the BIC function evaluation, because now we need to solve a sparse system rather than explicitly invert the matrix. The number of function evaluations, however, remains the same and therefore, the computational cost is $\mathcal{O}(NN_rN_c(np)^{3/2})$, where $N$ is the number of vectors used in the Monte Carlo approximation. In the IMS, a Quasi-Newton approach is used to optimize the BIC to obtain the model parameters, where now, the objective function and the gradient have been approximated using the Hutchinson trace approach. The objective function costs $\mathcal{O}(N(np)^{3/2})$, and the gradient evaluation requires an additional cost of $\mathcal{O}(N(np)^{3/2})$. We note that the gradient computation only requires solving only one additional linear system involving $\M{S}$, since intermediate computations involving the objective function can be reused for evaluating the gradient, details are provided in the Supplementary Materials. It is clear that the IMS is computationally cheaper than CV if the number of IMS iterations $N_\text{IMS}$ satisfies $N_\text{IMS}N \ll N_rN_cK$. This is indeed what we will see in our numerical experiments in \Sec{Examples}. A summary of the computational costs is provided in Table~\ref{t_costs}. In the above analysis, we have assumed a direct solver has been used to solve systems involving $\M{S}$; an iterative solver may be computationally beneficial for large-scale systems and the cost is similar. 
\begin{table}[h]\centering
\begin{tabular}{c|cccc}
 & CV  & BIC &  BIC + Hutchison & IMS \\ \hline
Obj. func. & $\mathcal{O}(KN_rN_c(np)^{3/2})$ &  $\mathcal{O}(N_rN_c(np)^{3})$&  $\mathcal{O}(N_rN_cN(np)^{3/2})$ & $\mathcal{O}(NN_\text{IMS}(np)^{3/2})$ \\
Gradient & - & - & - & $\mathcal{O}(N_\text{IMS}(np)^{3/2})$ \\
\end{tabular}
\caption{Comparison of computational cost of different approaches. `CV' refers to $K$-fold cross-validation, `BIC' refers to BIC grid search, `BIC + Hutchinson' refers to BIC grid search with BIC approximated using the Hutchinson trace approximation. `IMS' refers to Iterative Model Selection using a combination of Quasi-Newton method and Hutchinson approximation to the objective function and gradient.} 
\label{t_costs}
\end{table}

\section{Numerical Experiments}
\label{sec:Examples}

We now discuss numerical experiments to evaluate the exact and approximate IMS methods on simulated and real data. We also compare the IMS methods to standard grid-search strategies. All experiments were conducted in Matlab. To compare timing results, we record the time in seconds required to complete model selection. We perform computations in serial on a multi-core computer with twenty four 3.3 GHz Intel Xeon processors and 189 GB of RAM.  

\subsection{Simulated Data}
\label{sec:simulated_data}

We first compare IMS via Quasi-Newton optimization, BIC grid-search, and cross-validated grid-search on simulated data. We consider two versions of the Quasi-Newton optimization: 
(i) exact computation and (ii) Hutchinson estimation.
Identical experiments with the AIC used in place of the BIC lead to similar results and are summarized in the Supplementary Materials.

In all simulated data experiments, the matrix $\M{M}$ that we seek to recover consists of four biclusters.
\begin{eqnarray*}
\M{M} & = & 
\begin{pmatrix}
		10\M{J}_{25} & -25\M{J}_{25}\\
		25\M{J}_{25} & -10\M{J}_{25} \\
		\end{pmatrix},
\end{eqnarray*}
where $\M{J}_q$ is the $q\times q$ matrix of all ones. 
We observe the noisy matrix $\M{X} = \M{M} + \M{\mathcal{E}}$, where $\varepsilon_{ij}$ are i.i.d.\ $\mathcal{N}(0,1)$. We use the following row and column weights
\begin{eqnarray*}
\ME{W}{ij} \amp = \amp \tilde{w}_{ij} & = &
	\begin{cases}
		0 & \text{if} \ i=j\\
		1 & \text{if} \ i\neq j \ \text{and} \ i, j \in C_1 \ \text{or}  \ i, j \in C_2\\
		0.001 & \text{if} \ i\neq j \ \text{and} \ i \in C_1 , j \in C_2 \ \text{or} \ i \in C_2 , j \in C_1
	\end{cases},
\end{eqnarray*}
where $C_1 = \{1, \dots, 25\}$ and $C_2 = \{26, \dots, 50\}$. Thus, the weights introduce some erroneous smoothing across distinct biclusters. While the noise level and weights choices are admittedly not particularly challenging from an inferential perspective, our main objective in these studies is to understand the computational impact of the choices we make in deciding upon a model selection procedure.

\begin{figure}[t]
    \centering
	\includegraphics[scale=0.49]{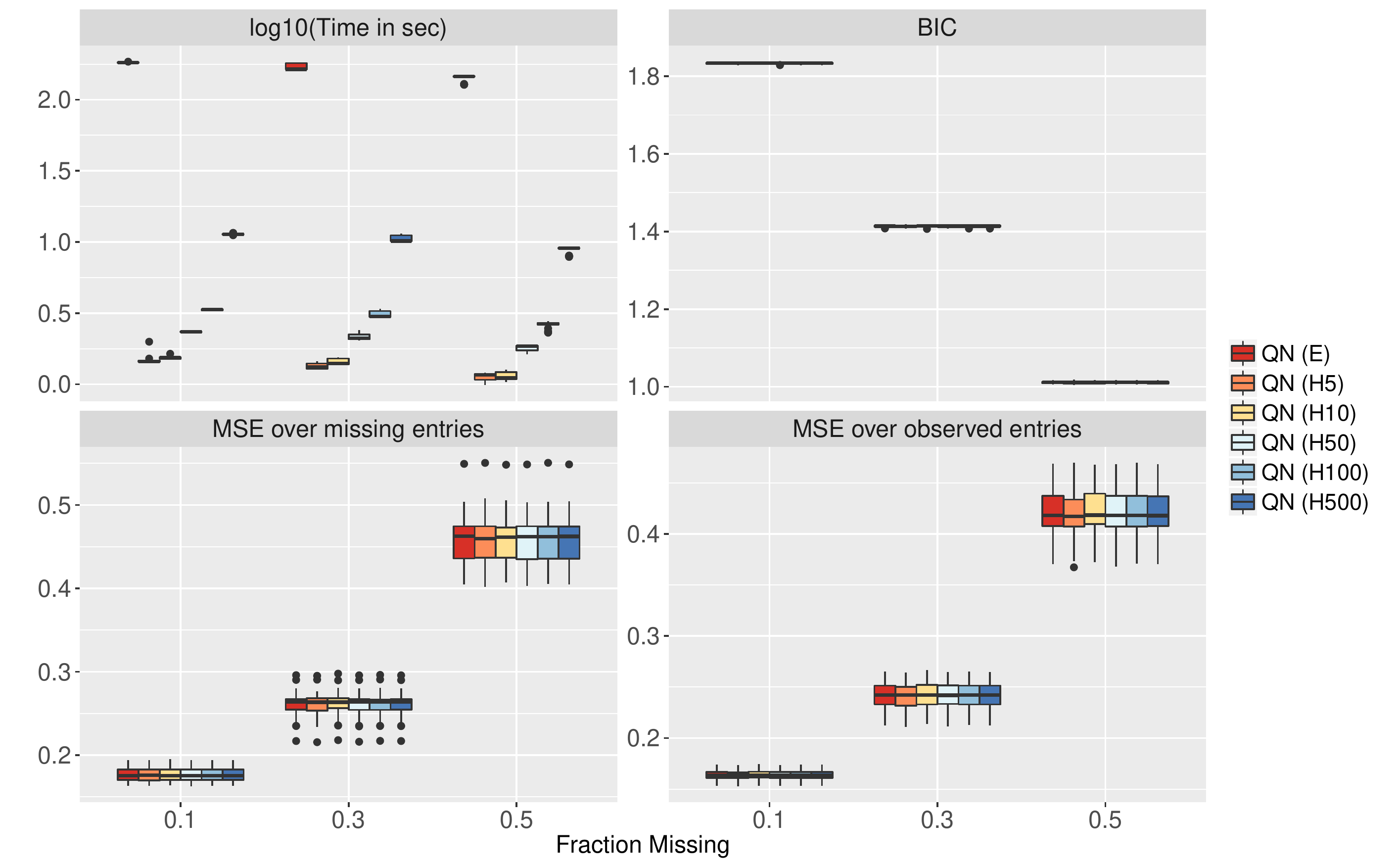}
	\caption{Comparison between IMS via Quasi-Newton with exact computation (E) and IMS via Quasi-Newton with Hutchinson estimation (HN indicates $N$ samples), under different missing fractions. \label{fig:numerical_ex4}}
\end{figure}

We perform three different experiments. The first experiment evaluates the run-time versus accuracy tradeoff of IMS via the Quasi-Newton method when using the exact computation versus the Hutchinson estimator. The point of this experiment is to assess how gracefully the quality of the stochastic approximation of the BIC degrades as a function of the number of samples used to compute the approximation. The second and third experiments compare the run-time versus accuracy tradeoff of the Quasi-Newton method against two standard grid-search model selection methods: cross-validation grid-search and grid-search over the BIC surface. For all experiments, we use three different missingness fractions (0.1, 0.3 and 0.5); metrics are averaged over 30 replications.

\begin{figure}[t]
    \centering
	\includegraphics[scale=0.49]{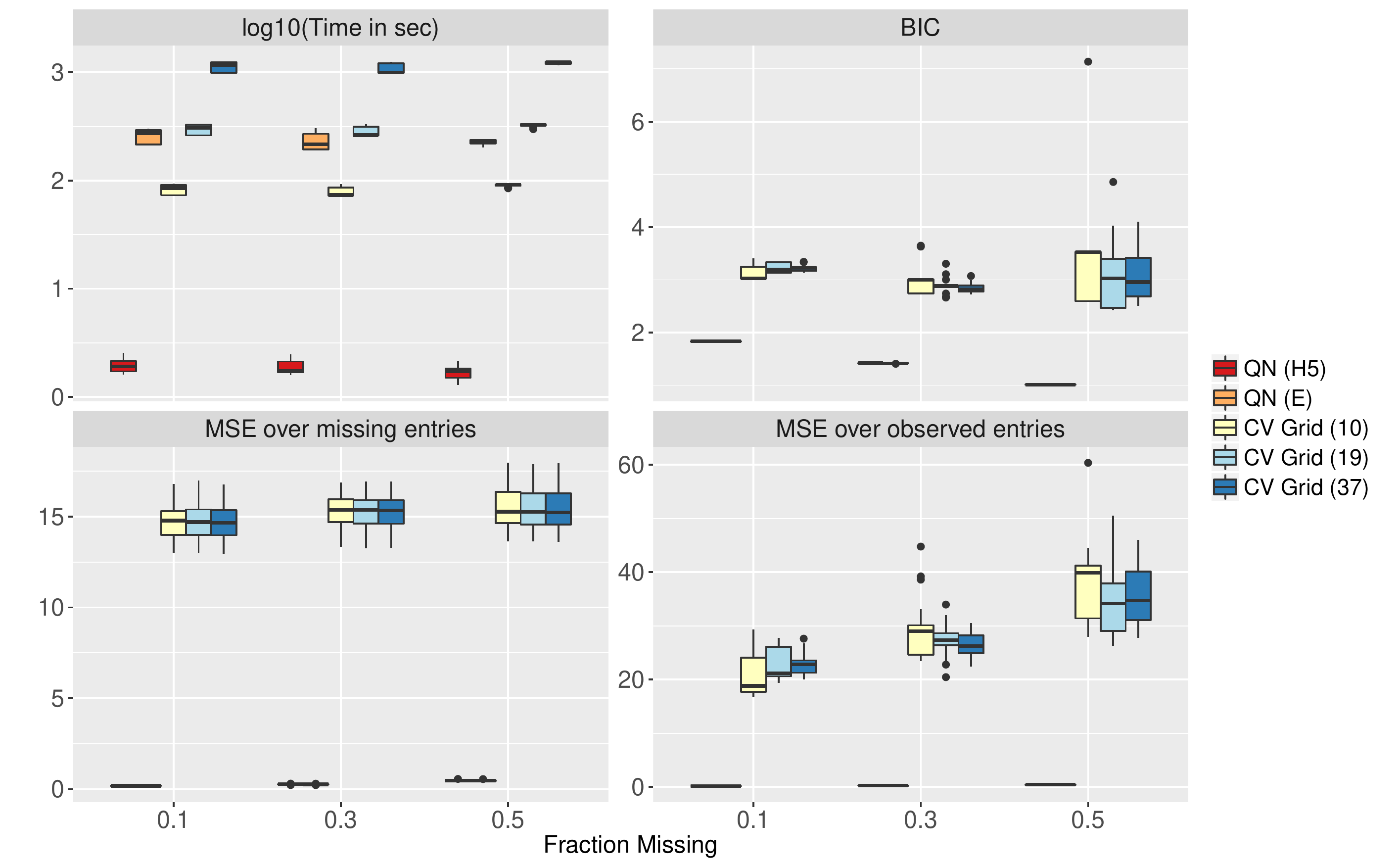}
	\caption{Comparison of (i) IMS via Quasi-Newton with exact computation (E), (ii) IMS via Quasi-Newton with Hutchinson estimation (HN indicates $N$ samples), and (iii) cross-validation grid-search, under different missing fractions.\label{fig:numerical_ex5}}
\end{figure}

\Fig{numerical_ex4} compares exact computation and Hutchinson estimation in terms of runtime, BIC, and mean squared error (MSE) over missing entries and observed entries. Unexpectedly, regardless of the method, the prediction accuracy increases as the fraction of missing entries decreases. Remarkably, however, the Quasi-Newton method with Hutchinson estimation can recover the matrix as well as the Quasi-Newton method with exact computation even when the sample size is 5. This is a nontrivial windfall, as using just 5 samples takes significantly less time than the exact computation. In light of this result, we use a sample size of 5 whenever we employ the Hutchinson estimator in subsequent experiments with simulated data.

\begin{figure}[t]
    \centering
	\includegraphics[scale=0.49]{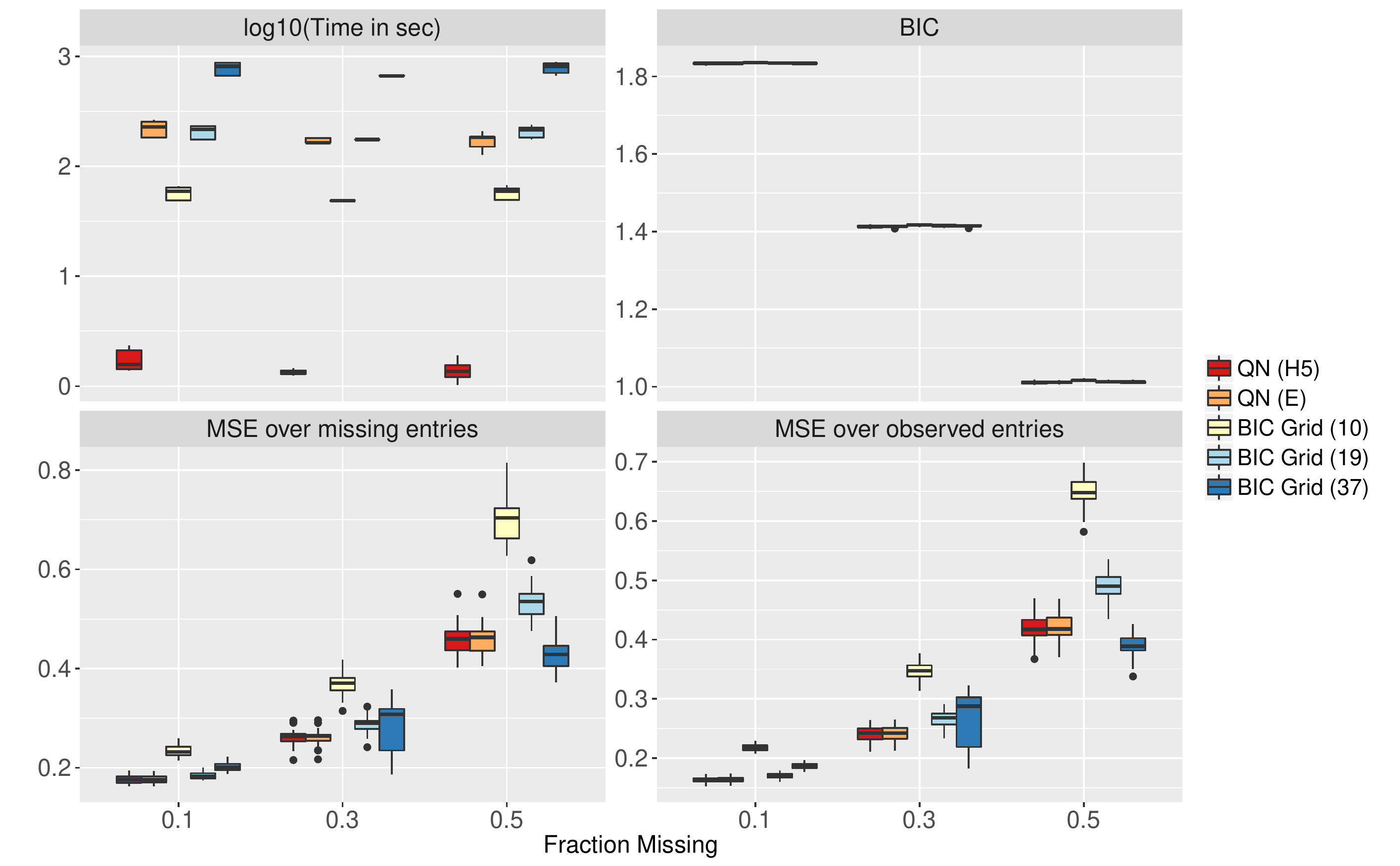}
	\caption{Comparison of (i) IMS via Quasi-Newton with exact computation (E), (ii) IMS via Quasi-Newton with Hutchinson estimation (HN indicates $N$ samples), and (iii) BIC grid-search, under different missing fractions.\label{fig:numerical_ex6}}
\end{figure}

\Fig{numerical_ex5} compares the Quasi-Newton method (exact computation and Hutchinson estimation) against cross-validation grid. We use 5-fold cross-validation on MSE over missing entries, and we test on three different levels of grid coarseness and three missingness fractions. All grid-searches occur over the range $(\gamma_r, \gamma_c) \in [e^{-9},e^1] \times [e^{-9},e^1]$. We denote by Grid ($N$) the set of $N$ evenly spaced points on the interval $[-9,1]$ that are then exponentiated. Thus, larger $N$ corresponds to a finer grid. The upper left and right panels in \Fig{numerical_ex5} show that Quasi-Newton with Hutchinson estimation takes the least time and has the best performance in objective value BIC. The lower left and right panels in \Fig{numerical_ex5}, show that the parameters chosen using BIC criteria lead to models with better performance in MSE than those chosen using cross-validation on MSE. The ability to go off the parameter grid is evident. Even the finest grid-search, Grid (37), in our study cannot reach the optimal MSE achieved obtained via the IMS via Quasi-Newton methods.

\Fig{numerical_ex6} compares the Quasi-Newton method (exact computation and Hutchinson estimation) against BIC grid-search with different levels of coarseness.  The upper left panel in \Fig{numerical_ex6} again shows that the Quasi-Newton direct optimization takes less time than searching over a finer grid, and the Quasi-Newton with Hutchinson estimation takes even less time. The lower left and right panels in \Fig{numerical_ex6} show that the Quasi-Newton with Hutchinson method achieves lower prediction error than grid-search when 10\% of the entries missing. Even though the finest grid-search, Grid (37), has better performance on average, when greater fractions of data are missing, the Quasi-Newton methods are not far behind. Employing the Quasi-Newton method with Hutchinson estimation is clearly attractive given its accuracy and superior run time.

\subsection{Real Data Example from Radiogenomics}
\label{sec:radiogenomics}

The goal in radiogenomics is to create a rational set of rules for recommending a patient for genomic testing based on a collection of radiographic findings \citep{Rutman2009, Colen2014}. The key task is to identify associations between radiographic features and genomic pathway activity. In the case of glioma radiology, the Visually Accessible Rembrandt Images (VASARI) is a standard way of reporting MRI finding for gliomas \citep{Gutman2013}. In addition, computational approaches to identifying image features based on tumor volume and variation in voxel intensities are also used \citep{Yang2015}.

To set some notation, suppose on $n$ patients we obtain two sets of measurements: a matrix $\M{U} \in \Real^{n\times p}$ where the $i$th row $\V{u}_i$ is the vector of radiographic features for the $i$th patient and a matrix $\M{V} \in \Real^{n\times q}$ of gene expressions where the $i$th row $\V{v}_i$ is the vector of pathway activities. To relate these computationally-derived image features with gene expression data, we consider the cross-covariance matrix $\M{X} = \M{U}\Tra\M{V}$. The $ij$th entry $\VE{x}{ij}$ of $\M{X}$ is the quantifies the association between the $i$th imaging feature and the $j$th pathway. The objective is to identify correlated features such as tumor size with gene mutations and ultimately derive more principled rules for ordering genetic testing.

There is a missing data challenge, however, as patients may be missing annotation on some radiographic features and gene pathway expression levels. Nonetheless, similarity weights can be inferred for the radiographic features as well as for the gene pathways using measurements from different modalities. The availability of similarity structure suggests that the problem of identifying missing associations in a radiogenomics study may be accomplished using BMC. We now consider how BMC performs on a radiogenomics example involving a subset of patients from The Cancer Genome Atlas (TCGA).

\begin{figure}[t]
	\centering
	\includegraphics[scale=0.35]{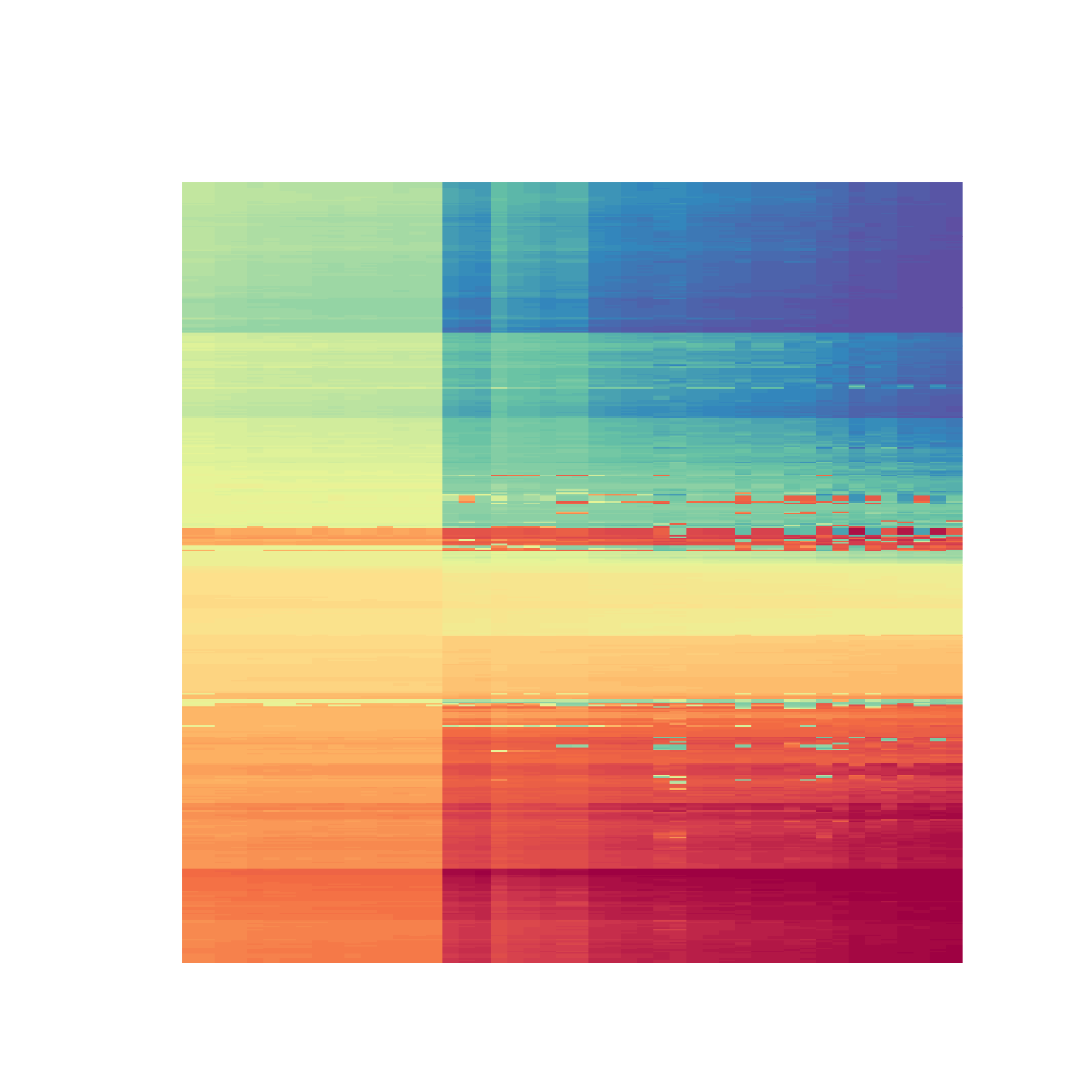}
	\caption{The complete cross covariance matrix between 48 SFTA texture features extracted from T1-post contrast MRI scans and 533 pathways.}
	\label{fig:radgeno2_crosscov}
\end{figure}

\Fig{radgeno2_crosscov} shows the cross covariance matrix between 48 segmentation-based fractal texture (SFTA) features extracted from T1-post contrast MRI scans and 533 genomic pathways\footnote{Pathway data is available at \href{https://gdac.broadinstitute.org/}{https://gdac.broadinstitute.org/}.}. The 48 SFTA features were obtained by using the method of \citet{Costa2012}. 
Both imaging features and pathways were recorded on the same set of 77 TCGA patients. Rows and columns have been reordered using single-linkage hierarchical clustering on the rows and columns independently. Reordering the rows and columns reveals that the data has a checkerboard pattern.

We construct weights in two stages analogous to the construction of $k$-nearest-neighbor graphs in spectral clustering \citep{Luxburg2007}. We describe how row weights are constructed; columns weights are construction similarly. Initial row weights $w_{ij}$ consist of the exponentiated Pearson correlation between the row. Thus, only positive correlations led to strong smoothing shrinkage. Although a more sophisticated weight choice may take advantage of negative correlations, these simple weights are effective for our purposes, as our focus is on evaluating the computational performance of IMS. We then make the row weights sparse, or mostly zero, using the following rule. Fixing $i$, we find the 5 largest values of $w_{ij}$. If $j$ is not among these top values, we set $w_{ij} = 0$. We then repeat this step with $j$ fixed. We do this for all $i$ and $j$. Approximations to this procedure should be employed and can be accomplished with approximate $k$-nearest-neighbors algorithms \citep{SlaCas2008}, as searching over all pairs requires computation that grows quadratically in the number rows. 

\begin{figure}[t]
    \centering
	\includegraphics[scale=0.49]{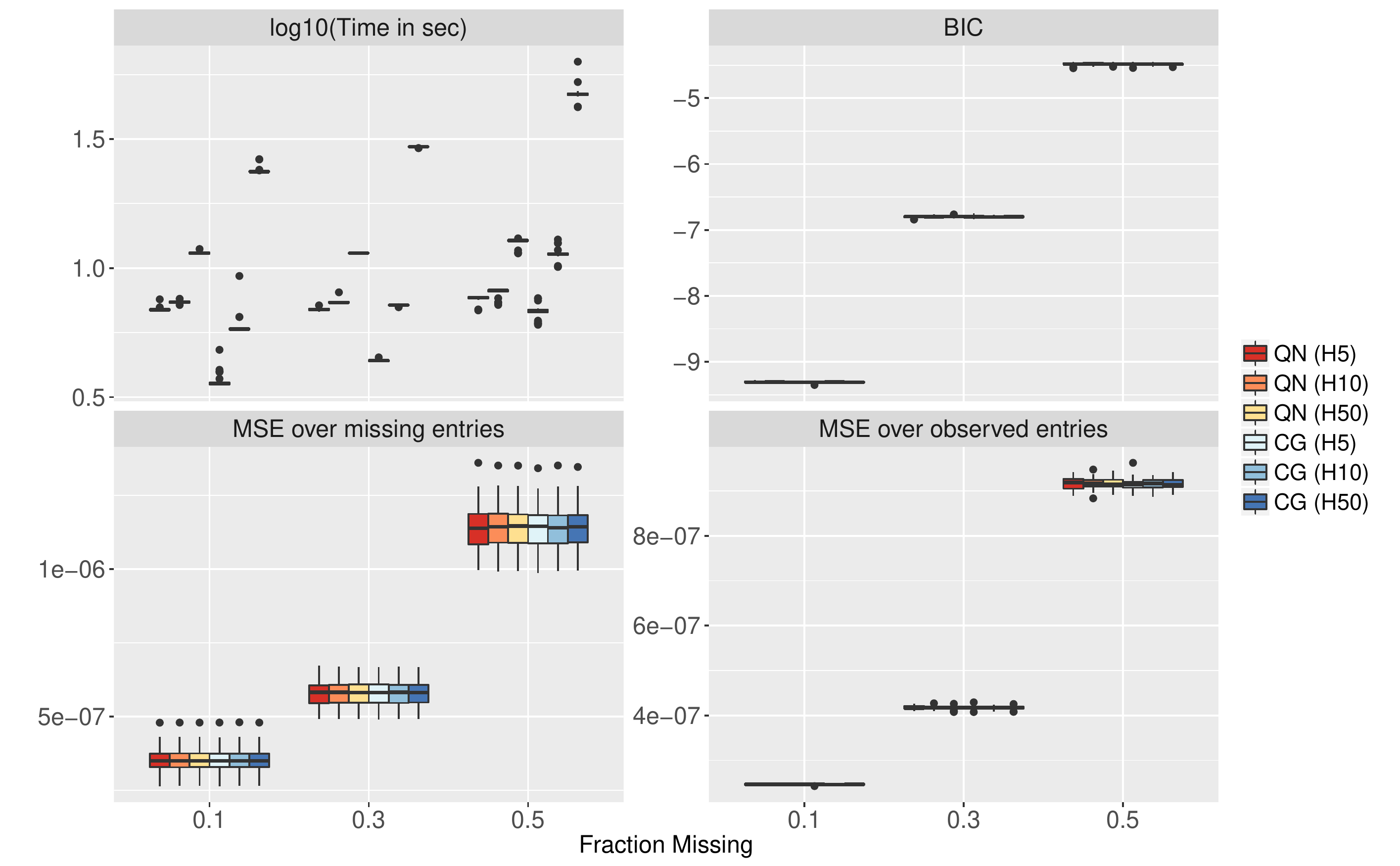}
	\caption{Radiogenomics Data: Comparison between Quasi-Newton with Hutchinson estimation (QN, size=N) and conjugate gradient with Hutchinson estimation (CG, size=N), under different sample size ($N=5,10,50$) and different missing fractions. \label{fig:radgeno}}
\end{figure}

Since the cross-covariance matrix is larger than the simulated matrices considered earlier, we performed the following illustrative experiment to compare the performance using Quasi-Newton with Hutchinson estimation and conjugate gradient (CG) with Hutchinson estimation. It is important to note here that the CG method is used to solve a linear system in order to compute the gradient of the BIC and not used as a method to minimize the BIC. Results are shown in \Fig{radgeno}. The experiment were repeated for 30 times under 3 different missing fractions: 0.1, 0.3 and 0.5. All methods produce equally good predictions at all levels of missingness. We see that for the larger matrix, however, that the CG method can provide some additional speed ups for larger matrices such as the radiogenomics cross-covariance considered here.

\section{Discussion}
\label{sec:conclusion}

In this paper, we revisited the matrix completion problem when we have additional information on row and column similarities. We posed this problem as a penalized convex least squares problem and established several properties of this problem formulation, in particular when this problem admits a unique solution. We also introduced an efficient algorithm called IMS for tuning the amount of regularization in practice. We showed that when rows and columns exhibit a strong clustering pattern, a pair of differentiable Laplacian penalties can recover a low-rank structure that is consistent with the row and column similarities. This motivates solving a differentiable convex optimization problem, that has been previously proposed in the literature, with two penalty terms instead of a nondifferentiable convex optimization problem with three penalty terms one of which is a nuclear norm penalty. Dropping the nuclear norm penalty has three advantages: (i) an expensive SVD calculation is avoided, (ii) model selection is reduced to searching for two tuning parameters, instead of three, and (iii) model selection via the BIC can be achieved by minimizing a differentiable function of two variables. We emphasize that what makes advantage (iii) possible is that the degrees of freedom has a differentiable closed form expression. If we included the nuclear norm penalty, we could derive an unbiased estimate of the degrees of freedom following \citep{Candes2013}, but the resulting estimate would not be differentiable and consequently the BIC could not be minimized via the Quasi-Newton method.

Exhaustively searching for a minimizer of a surrogate measure of predictive performance over a regular two-way grid of tuning parameters is typically inefficient. Ideally one would place grid points more densely in the model space where the optimal model resides. Unfortunately, this information is not known in advance. 
With IMS, we do not need to construct a grid of candidate regularization parameters and can even identify better predictive models by going off the parameter grid and not limiting the search to models defined by the parameters on the grid. Since the work required to fit a model at a grid point is essentially the same as the work required for a single IMS step, searching the space with IMS can lead to drastically fewer linear system solves in practice.

We can further expedite model selection by reducing the amount of work expended at each gradient calculation. In this paper, we proposed using the Hutchinson estimator to approximate the trace term in the BIC. This stochastic approximation to the BIC is then minimized using the Quasi-Newton method. Remarkably, our numerical experiments demonstrated that even coarse approximations regularly lead to the selection of models with prediction errors that rivaled those obtained by minimizing the exact BIC. This is significant because computations with the approximations take an order of magnitude less time than their exact counterparts.

As mentioned earlier, alternatives to the Quasi-Newton method could be employed in IMS. We leave it as future work to investigate how alternative modified first order methods might fare against the Quasi-Newton approach proposed here. We also leave it as future work to investigate how other stochastic approaches, such as stochastic Quasi-Newton method \citep{Byrd2016}, might fare against the SAA method proposed here.

Exploiting the special structure in this version of the matrix completion problem can lead to surprisingly effective computational gains in model selection. We close by noting that this simple but effective strategy applies in contexts outside of matrix completion and should be considered as an alternative to grid-search methods for automating and streamlining model selection when two or more tuning parameters need to be identified. 


\newpage

\appendix
\section{Verifying \As{missingness} in work linear in the data}
\label{sec:verify}

Recall that the connected components of a graph can be determined in work that is linear in the number of vertices via a breadth-first-search. The brief description of the algorithm, along with the computational costs, is provided in Algorithm~\ref{alg:verify}

\begin{algorithm}[!ht]
Initialize $\M{M} \in \Real^{n \times p}$: Set $\ME{M}{ij} = 1$ if $(i,j) \in \Omega$ and 0 otherwise.
\begin{algorithmic}[0]
  \caption{Check \As{missingness}}
  \label{alg:verify}
\State $(A_1, \ldots, A_R) \gets$ Find-Connected-Components($\mathcal{G}_r$) 
\Comment $\mathcal{O}(n)$ work
\State $(B_1, \ldots, B_C) \gets$ Find-Connected-Components($\mathcal{G}_c$) 
\Comment $\mathcal{O}(p)$ work
\For{$r = 1, \ldots, R$ and $c = 1, \ldots, C$}
\Comment $\mathcal{O}(np)$ work
\If{ $\V{\chi}_{A_r}\Tra \M{M} \V{\chi}_{B_c} = 0$}
\State \Return False
\EndIf
\EndFor
\State \Return True
\end{algorithmic}
\end{algorithm}

\section{Proofs of Results in \Sec{bmc_solution} and \Sec{model_selection}}
\label{sec:proof}

In this section we give proofs of the propositions and theorems within the paper.

\subsection{\Prop{shrinkage}}

We first recall a key fact about the number of connected components of a graph and the spectrum of its graph Laplacian matrix.

\begin{proposition}[Proposition 2 in \cite{Luxburg2007}]
\label{prop:spectral}
Let $\mathcal{G}$ be an undirected graph with non-negative weights. Then the multiplicity $k$ of the eigenvalue 0 of $\M{L}$, the graph Laplacian of $\mathcal{G}$, equals the number of connected components $A_1, \ldots, A_k$ in the graph $\mathcal{G}$. The eigenspace of eigenvalue 0 is spanned by the indicator vectors $\V{\chi}_{A_1}, \ldots, \V{\chi}_{A_k}$ of those components.
\end{proposition}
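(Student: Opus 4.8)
The plan is to prove the multiplicity claim and the eigenspace claim simultaneously by exhibiting the kernel of $\M{L}$ explicitly, exploiting the nonnegativity of the weights. Since $\M{L}$ is symmetric and positive semidefinite, $0$ is its smallest eigenvalue and its (geometric) multiplicity equals $\dim \Ker(\M{L})$; hence it suffices to show that $\Ker(\M{L})$ has dimension $k$ and is spanned by $\V{\chi}_{A_1}, \ldots, \V{\chi}_{A_k}$.

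First I would record the standard quadratic-form identity: for any $\V{f} \in \Real^n$,
\[
\V{f}\Tra \M{L} \V{f} = \frac{1}{2}\sum_{i,j} w_{ij}(f_i - f_j)^2,
\]
which follows by expanding the right-hand side and substituting the diagonal entries $L_{ii} = \sum_{i'} w_{ii'}$ and off-diagonal entries $L_{ij} = -w_{ij}$ from the definition of the Laplacian. Because $\M{L}\V{f} = \V{0}$ implies $\V{f}\Tra\M{L}\V{f} = 0$, and conversely $\V{f}\Tra\M{L}\V{f}=0$ forces $\M{L}\V{f}=\V{0}$ for a positive semidefinite matrix, membership in $\Ker(\M{L})$ is equivalent to the vanishing of this sum.

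The crucial step uses nonnegativity. Since every $w_{ij} \geq 0$, the sum of nonnegative terms $w_{ij}(f_i-f_j)^2$ vanishes if and only if each term vanishes, i.e.\ $f_i = f_j$ whenever $w_{ij} > 0$ (equivalently, whenever $i$ and $j$ are adjacent). Propagating this equality along paths then shows that $\V{f}$ must be constant on every connected component: if $i$ and $j$ lie in the same component $A_r$, a sequence of adjacent vertices connects them and successively forces $f_i = f_j$. Conversely, any $\V{f}$ that is constant on each component makes every term vanish and hence lies in $\Ker(\M{L})$; in particular each indicator $\V{\chi}_{A_r}$ (constant on $A_r$ and on each other component) belongs to the kernel.

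It remains to count dimensions. The functions that are constant on each of the $k$ connected components form a subspace of dimension exactly $k$, carrying one free value per component; the indicators $\V{\chi}_{A_1}, \ldots, \V{\chi}_{A_k}$ have pairwise disjoint supports, hence are linearly independent, and every component-wise constant vector can be written as $\sum_{r} c_r \V{\chi}_{A_r}$, so they form a basis of that subspace. Combining this with the two inclusions above gives $\Ker(\M{L}) = \operatorname{span}\{\V{\chi}_{A_1}, \ldots, \V{\chi}_{A_k}\}$, which delivers both the multiplicity $k$ and the stated spanning set. I expect no substantial obstacle, as the argument is entirely linear-algebraic and combinatorial; the only point demanding care is the path-propagation step, which hinges on the observation that a single positive weight already forces equality, together with the definition of a connected component, and on verifying both directions of the kernel characterization rather than just one.
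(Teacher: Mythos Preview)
Your argument is correct and is the standard proof of this classical fact. Note, however, that the paper does not actually supply its own proof of this proposition: it is quoted verbatim as Proposition~2 from \cite{Luxburg2007} and invoked as a known result in the proof of \Prop{shrinkage}. The proof in \cite{Luxburg2007} proceeds exactly as you do---via the quadratic-form identity, nonnegativity of the summands, and propagation of equality along paths within a component---so your write-up matches the cited source.
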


We are now ready to prove \Prop{shrinkage}.
\begin{proof}
First assume that $\M{Z} = \sum_{r=1}^R\sum_{c=1}^C \mu_{rc}\V{\chi}_{A_r}\V{\chi}_{B_c}\Tra$ for some $\mu_{rc}$ for $r =1, \ldots, R$ and $c = 1, \ldots, C$.
\Prop{spectral} implies that $\M{L}_r\V{\chi}_{A_r} = \V{0}$ for all $r = 1, \ldots, R$. Now invoke the linearity and cyclic permutation properties of the trace function to simplify the expression $\tr(\M{Z}\Tra\M{L}_r\M{Z})$.
\begin{eqnarray*}
\tr(\M{Z}\Tra\M{L}_r\M{Z}) \amp = \amp \sum_{r=1}^R\sum_{c=1}^C \mu_{rc} \tr(\M{Z}\Tra\M{L}_r\V{\chi}_{A_r}\V{\chi}_{B_c}\Tra) \amp = \amp 
\sum_{r=1}^R\sum_{c=1}^C \mu_{rc} \langle \M{Z}\V{\chi}_{B_c}, \V{0} \rangle 
 \amp = \amp 0.
\end{eqnarray*}
Analogously, $\tr(\M{Z}\M{L}_c\M{Z}\Tra) = 0$, and consequently $J(\M{Z}) = 0$.

Now assume that $J(\M{Z}) = 0$. 
Let $\mathcal{F}$ denote the set of rank-1 matrices defined by the connected components of $\mathcal{G}_r$ and $\mathcal{G}_c$, namely $\mathcal{F} = \{ \V{\chi}_{A_r}\V{\chi}_{B_c}\Tra : r = 1, \ldots, R, c = 1, \ldots, C\}$. The set $\mathcal{F}$ is a basis for the $RC$ dimensional subspace of matrices span$(\mathcal{F})$. Let $\mathcal{F}^\perp$ denote the orthogonal complement of $\mathcal{F}$ and suppose that $\M{Z} = \M{M} + \M{N}$ where $\M{M} \in \mathcal{F}$ and $\M{N} \in \mathcal{F}^\perp$. Consequently, $\tr(\M{Z}\Tra\M{L}_r\M{Z}) = \tr(\M{N}\Tra\M{L}_r\M{N}) = 0$ and $\tr(\M{Z}\M{L}_c\M{Z}\Tra) = \tr(\M{N}\M{L}_c\M{N}\Tra) = 0$. Since $\M{N} \in \mathcal{F}^\perp$, we conclude that $\M{N} = \M{0}$.


\end{proof}

\subsection{\Prop{exact_recovery}}


\begin{proof}
Note that the objective function in \Eqn{bmc} is always nonnegative and that $\M{Z} = \M{X}$ attains this lower bound. Since \As{missingness} holds, by \Thm{existence} we can assert that $\M{Z} = \M{X}$ is the unique global minimizer of problem \Eqn{bmc}.
\end{proof}

\subsection{\Thm{existence}}

It is easier to work with vectorized quantities. Let $\V{x} = \vec(\M{X})$, namely $\V{x}$ is the vector obtained by stacking the columns of $\M{X}$ on top of each other. We observe that $\tr(\M{Z}\Tra\M{L}_r\M{Z}) =  \V{Z}\Tra ( \M{I}\Kron\M{L}_r) \V{Z}$ and $\tr(\M{Z}\M{L}_c\M{Z}\Tra) = \V{Z}\Tra (\M{L}_c \Kron \M{I}) \V{Z}$ where $\Kron$ denotes the Kronecker product. Then the objective in \Eqn{bmc} can be written as
\begin{eqnarray}
\label{eq:bmc_vec}
\min_{\V{z}}\>\frac{1}{2} \lVert \M{P}_\Omega \V{X} - \M{P}_\Omega \V{Z} \rVert_{2}^2 + \frac{\gamma_r}{2} \V{Z}\Tra ( \M{I}\Kron\M{L}_r) \V{Z} +  \frac{\gamma_c}{2} \V{Z}\Tra (\M{L}_c \Kron \M{I}) \V{Z}. 
\end{eqnarray}

We first establish that \Eqn{bmc_vec} always has a solution. Recall the edge-incidence matrix of the row graph $\M{\Phi}_r \in \Real^{\lvert E_r \rvert \times n}$ encodes its connectivity and is defined as
\begin{eqnarray}
\ME{\phi}{r,li} = \begin{cases}
\sqrt{w_l} & \text{If vertex $i$ is the head of edge $l$,} \\
-\sqrt{w_l} & \text{If vertex $i$ is the tail of edge $l$,} \\
0 & \text{otherwise.}
\end{cases}
\end{eqnarray}
The column edge-incidence matrix $\M{\Phi}_c \in \Real^{\lvert E_c \rvert \times p}$ is defined similarly. Recall that the Laplacian matrix of a graph can be written in terms of the edge-incidence matrix of the graph. Thus, Laplacian row matrix can be expressed as $\M{L}_r = \M{\Phi}_r\Tra \M{\Phi}_r$, and the Laplacian column matrix can be expressed as $\M{L}_c = \M{\Phi}_c\Tra \M{\Phi}_c$.
With these facts in hand, we can rewrite \Eqn{bmc_vec} as the following least squares problem.
\begin{eqnarray}
\label{eq:bmc_ls}
\min_{\V{z}}\>\frac{1}{2} \left \lVert \Vtilde{x} - \M{C}\V{z} \right \rVert_2^2,
\end{eqnarray}
where 
\begin{eqnarray*}
\Vtilde{x} & = & \begin{pmatrix}
\M{P}_\Omega \\ \M{0} \\ \M{0} \end{pmatrix}\V{x} \quad\quad \text{and} \quad\quad 
\M{C}  \amp = \amp \begin{pmatrix} \M{P}_\Omega \\
\sqrt{\gamma_r} \M{I}\Kron\M{\Phi}_r \\
\sqrt{\gamma_c} \M{\Phi}_c \Kron \M{I}
\end{pmatrix}.
\end{eqnarray*}
Recall that a least squares problem always has a solution since its solution is a Euclidean projection onto a closed convex set, namely the column space of the design matrix.

Having established that \Eqn{bmc_vec} always has a solution, we next characterize its solutions. A vector $\V{z}$ is a solution to \Eqn{bmc_vec} if and only if it satisfies the linear system in~\Eqn{smooth_lin}. 
where $\M{S}$ is defined in~\Prop{dof}.
The linear system in \Eqn{smooth_lin} may not be invertible. Since $\M{L}_r$ and $\M{L}_c$ are positive semidefinite,~\citet[Theorem 13.12]{laub2005matrix} guarantees that $\M{I} \Kron \M{L}_r$ and $\M{L}_c \Kron \M{I}$ are also positive semidefinite; therefore, $\M{S}$ is the sum of three positive semidefinite matrices, each of which may be rank deficient. 

We now state the conditions under which the linear system \Eqn{smooth_lin} has a unique solution. 

\begin{lemma}
\label{lem:l_S_spd}
Assume that $\gamma_r, \gamma_c > 0$. Then $\M{S}$ is positive definite if and only if
\begin{eqnarray*}
\Ker(\M{P}_\Omega) \cap \Ker(\M{I} \Kron \M{L}_r) \cap \Ker(\M{L}_c \Kron \M{I}) & = & \{\V{0}\}.
\end{eqnarray*}
\end{lemma}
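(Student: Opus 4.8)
The plan is to prove both directions of the equivalence directly from the definition of positive definiteness, exploiting the fact that $\M{S}$ is a sum of three positive semidefinite matrices. The key elementary observation is that if $\M{A}$, $\M{B}$, $\M{C}$ are all positive semidefinite, then for any vector $\V{v}$, the quadratic form $\V{v}\Tra(\M{A}+\M{B}+\M{C})\V{v} = \V{v}\Tra\M{A}\V{v} + \V{v}\Tra\M{B}\V{v} + \V{v}\Tra\M{C}\V{v}$ is a sum of three nonnegative terms, hence it vanishes if and only if each term vanishes individually. Applying this with $\M{A} = \M{P}_\Omega$, $\M{B} = \gamma_r(\M{I}\Kron\M{L}_r)$, and $\M{C} = \gamma_c(\M{L}_c\Kron\M{I})$ — all positive semidefinite, the latter two by the remark preceding the lemma via \citet[Theorem 13.12]{laub2005matrix} — we get that $\V{v}\Tra\M{S}\V{v} = 0$ iff $\V{v}\Tra\M{P}_\Omega\V{v} = 0$ and $\V{v}\Tra(\M{I}\Kron\M{L}_r)\V{v} = 0$ and $\V{v}\Tra(\M{L}_c\Kron\M{I})\V{v} = 0$, using that $\gamma_r, \gamma_c > 0$ to drop the positive scalars.

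Next I would use the standard fact that for a positive semidefinite matrix $\M{M}$, one has $\V{v}\Tra\M{M}\V{v} = 0$ if and only if $\M{M}\V{v} = \V{0}$, i.e.\ $\V{v} \in \Ker(\M{M})$. (This follows by writing $\M{M} = \M{R}\Tra\M{R}$ so that $\V{v}\Tra\M{M}\V{v} = \lVert \M{R}\V{v}\rVert_2^2$, which is zero iff $\M{R}\V{v} = \V{0}$ iff $\M{M}\V{v} = \M{R}\Tra\M{R}\V{v} = \V{0}$.) Combining this with the previous step: $\V{v}\Tra\M{S}\V{v} = 0$ if and only if $\V{v} \in \Ker(\M{P}_\Omega) \cap \Ker(\M{I}\Kron\M{L}_r) \cap \Ker(\M{L}_c\Kron\M{I})$. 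Now $\M{S}$ is positive definite precisely when the only vector $\V{v}$ with $\V{v}\Tra\M{S}\V{v} = 0$ is $\V{v} = \V{0}$ (since $\M{S}$ is already known to be positive semidefinite, being a sum of positive semidefinite matrices). Therefore $\M{S}$ is positive definite if and only if the triple intersection of kernels is $\{\V{0}\}$, which is exactly the claimed statement.

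There is no real obstacle here; the lemma is essentially a packaging of the fact that the null space of a sum of positive semidefinite matrices is the intersection of their null spaces. The only points requiring a word of care are (i) citing that the Kronecker products $\M{I}\Kron\M{L}_r$ and $\M{L}_c\Kron\M{I}$ are positive semidefinite, which is already noted in the text just above the lemma statement, and (ii) that the scalars $\gamma_r, \gamma_c$ are strictly positive so that scaling does not change the kernels — this is the hypothesis of the lemma. I would write the proof as a short chain of iff's through the two displayed equivalences, making sure to state the $\V{v}\Tra\M{M}\V{v} = 0 \iff \M{M}\V{v} = \V{0}$ fact for positive semidefinite $\M{M}$ explicitly since it is used three times.
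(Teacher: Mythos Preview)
Your proposal is correct and follows essentially the same approach as the paper: expand $\V{v}\Tra\M{S}\V{v}$ as a sum of three nonnegative quadratic forms, observe it vanishes iff each term vanishes, and translate vanishing of each quadratic form into membership in the corresponding kernel. If anything, you are slightly more explicit than the paper in justifying the step $\V{v}\Tra\M{M}\V{v}=0 \iff \V{v}\in\Ker(\M{M})$ for positive semidefinite $\M{M}$, which the paper uses implicitly.
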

\begin{proof}
Consider the quadratic form $\V{v}\Tra\M{S}\V{v}$.
\begin{eqnarray*}
\V{v}\Tra\M{S}\V{v} & = & \V{v}\Tra \left [\M{P}_\Omega + \gamma_r (\M{I} \Kron \M{L}_r) + \gamma_c (\M{L}_c \Kron \M{I}) \right ]\V{v} \\
& = & \V{v}\Tra \M{P}_\Omega \V{v} + \gamma_r \V{v}\Tra(\M{I} \Kron \M{L}_r) \V{v} + \gamma_c \V{v}\Tra (\M{L}_c \Kron \M{I}) \V{v}.
\end{eqnarray*}
Since $\M{P}_\Omega, \M{I} \Kron \M{L}_r$, and $\M{L}_c \Kron \M{I}$ are positive semidefinite, $\V{v}\Tra\M{S}\V{v} \geq  0$, with equality if and only if
\begin{eqnarray*}
\V{v}\Tra \M{P}_\Omega \V{v} \amp = \amp 
 \V{v}\Tra(\M{I} \Kron \M{L}_r) \V{v}
 \amp = \amp 
 \V{v}\Tra (\M{L}_c \Kron \M{I}) \V{v}
 \amp = \amp 0.
\end{eqnarray*}
If $\Ker(\M{P}_\Omega) \cap \Ker(\M{I} \Kron \M{L}_r) \cap \Ker(\M{L}_c \Kron \M{I}) = \{\V{0}\}$, then $\V{v}\Tra \M{S} \V{v} = 0$ implies that $\V{v} = \V{0}$ and $\M{S}$ has full rank. If $\Ker(\M{P}_\Omega) \cap \Ker(\M{I} \Kron \M{L}_r) \cap \Ker(\M{L}_c \Kron \M{I}) \not = \{\V{0}\}$, then there is a $\V{v} \not = \V{0}$ such that $\V{v}\Tra\M{S}\V{v} = 0$, and therefore $\M{S}$ is not invertible.
\end{proof}

We are now ready to prove \Thm{existence}.

\begin{proof}


Recall that $\Ker(\M{L}_r) = \text{Span}\{ \V{\chi}_{A_1}, \ldots, \V{\chi}_{A_R}\}$ and $\Ker(\M{L}_c) = \text{Span}\{ \V{\chi}_{B_1}, \ldots, \V{\chi}_{B_C}\}$. From~\cite[Theorem 13.16]{laub2005matrix}, it follows that 
\begin{eqnarray*}
\Ker(\M{I} \Kron \M{L}_r) \cap \Ker(\M{L}_c \Kron \M{I}) & = & \text{Span}\{ \V{\chi}_{B_c} \Kron \V{\chi}_{A_r} : r = 1, \ldots, R, c = 1, \ldots, C \}.
\end{eqnarray*}
\Lem{l_S_spd} implies that $\M{S}$ is positive definite if and only if $\M{P}_\Omega[\V{\chi}_{B_c} \Kron \V{\chi}_{A_r}] \not = \V{0}$ for all $r = 1, \ldots, R$ and $c = 1, \ldots, C$. Because
$\vec (\V{\chi}_{A_r}\V{\chi}_{B_c}\Tra) = \V{\chi}_{B_c} \Kron \V{\chi}_{A_r}$, this latter condition is equivalent to \As{missingness}.


If \As{missingness} does not hold, the least squares problem \Eqn{bmc_ls} will have infinitely many solutions since $\M{S}$ is positive semidefinite and the system is consistent.
\end{proof}

\subsection{\Prop{constrained_solution}}

\begin{proof}
Note that \Eqn{constrained} is equivalent to the following unconstrained problem.
\begin{eqnarray}
\label{eq:unconstrained}
\underset{\mu_{rc}}{\min}\; \frac{1}{2} \left \lVert \M{P}_\Omega \left (\sum_{r=1}^R\sum_{c=1}^C\mu_{rc}\V{\chi}_{A_r}\V{\chi}_{B_c}\Tra \right)  - \M{P}_\Omega (\V{x})\right \rVert_2^2.
\end{eqnarray}

Since the product set $\{A_1, \ldots, A_R\} \times \{B_1, \ldots, B_C\}$ is a partition of the index set $\{1, \ldots, n\} \times \{1, \ldots, p\}$, it follows that
$[\V{\chi}_{A_r} \V{\chi}_{B_c} \Tra ]_{ij} = 1$ if $i \in A_r$ and $j \in B_c$ and is 0 otherwise.
Using this fact the optimization problem in \Eqn{unconstrained} can be written as
\begin{eqnarray}
\underset{\mu_{rc}}{\min}\; \frac{1}{2} \sum_{r=1}^R\sum_{c=1}^C \sum_{(i,j) \in \Omega_{rc}} (\mu_{rc} - \VE{x}{ij} )^2.
\end{eqnarray}
The above problem separates over $\mu_{rc}$, and we can solve each problem individually.
\begin{eqnarray}
\underset{\mu_{rc}}{\min}\; \frac{1}{2} \sum_{(i,j) \in \Omega_{rc}} (\mu_{rc} - \VE{x}{ij} )^2.
\end{eqnarray}
Since \As{missingness} holds, we observe at least one entry for every partition. Therefore, each univariate optimization requires minimizing a strongly convex function. Consequently, the problem \Eqn{constrained} has a unique global minimizer. Elementary calculus shows that the solution is given by \Eqn{solution}.
\end{proof}

\subsection{\Thm{limiting_solution}}

\begin{proof}
Let $\Vn{\gamma}{m}$ be an arbitrary sequence of regularization parameters such that $\underline{\gamma}^{(m)} \rightarrow \infty$, and let $\Vn{z}{m}$ denote the solution to the biclustered matrix completion problem with $\Vn{\gamma}{m}$. Our proof proceeds in three steps.

\begin{description}
\item [Step 1: The sequence $\{\Vn{z}{m}\}$ has at least one limit point.]
We show that the sequence $\Vn{z}{m}$ is bounded and therefore resides in a compact set. First note that all but finitely many $\underline{\gamma}^{(m)} \geq 1$ since  $\underline{\gamma}^{(m)} \rightarrow \infty$. Next observe the following bound on the norm of $\Vn{z}{m}$,
$\lVert \Vn{z}{m} \rVert_2  \leq \lVert \M{S}\Inv \rVert_2 \lVert \M{P}_\Omega \V{x} \rVert_2$.
We just need to bound the operator norm of $\M{S}\Inv$. Note that $\M{S}  \succeq  \M{P}_\Omega + \M{L}_c\Kron\M{I} + \M{I}\Kron\M{L}_r$
for all $\V{\gamma}$ such that $\underline{\gamma} \geq 1$. The smallest eigenvalue of the matrix on the right, call it $\nu$, is strictly positive, therefore, $\lVert \M{S}\Inv \rVert_2 \leq \nu\Inv$ for all $m$ sufficiently large. Therefore all but finitely many $\Vn{z}{m}$ are within a Euclidean ball of radius $\nu\Inv\lVert \M{P}_\Omega \V{x} \rVert_2$. Consequently, $\Vn{z}{m}$ has at least one limit point.
\item [Step 2: Limit points of $\{\Vn{z}{m}\}$ are global minimizers of problem \Eqn{constrained}.]
We follow the argument in Theorem 17.1 of \cite{Nocedal2006}. Let $\V{z}^\star$ denote the vectorization of the unique solution $\M{Z}^\star$ to \Eqn{constrained}, then
\begin{eqnarray}
\label{eq:ineqA}
\frac{1}{2} \lVert \M{P}_\Omega \Vn{z}{m} - \M{P}_\Omega \V{x} \rVert_2^2 + \frac{\underline{\gamma}^{(m)}}{2} \psi(\Vn{z}{m}) & \leq &
\frac{1}{2} \lVert \M{P}_\Omega \V{z}^\star - \M{P}_\Omega \V{x} \rVert_2^2,
\end{eqnarray}
where
\begin{eqnarray*}
 \psi(\V{z}) & = & \V{z}\Tra(\M{I} \Kron \M{L}_r)\V{z} + \V{z}\Tra(\M{L}_c \Kron \M{I})\V{z}.
\end{eqnarray*}
Rearranging the inequality in \Eqn{ineqA} gives us
\begin{eqnarray}
\label{eq:ineqB}
\psi(\Vn{z}{m}) 
& \leq & \frac{1}{\underline{\gamma}^{(m)}}\left [\lVert \M{P}_\Omega \V{z}^\star - \M{P}_\Omega \V{x} \rVert_2^2 - \lVert \M{P}_\Omega \Vn{z}{m}- \M{P}_\Omega \V{x} \rVert_2^2\right ].
\end{eqnarray}
Let $\Vtilde{z}$ be a limit point of $\{\Vn{z}{m}\}$. Therefore, there is a subsequence $\mathcal{M}$ such that
$\underset{m \in \mathcal{M}}{\lim}\; \Vn{z}{m}  =  \Vtilde{z}$.
Taking limits of both sides of \Eqn{ineqB} along the subsequence $\mathcal{M}$ gives us
\begin{eqnarray}
\label{eq:ineqC}
\psi(\Vtilde{z}) & \leq & 
\underset{m \in \mathcal{M}}{\lim}\;  \frac{1}{\underline{\gamma}^{(m)}}\left [\lVert \M{P}_\Omega \V{z}^\star - \M{P}_\Omega \V{x} \rVert_2^2 - \lVert \M{P}_\Omega \Vn{z}{m}- \M{P}_\Omega \V{x} \rVert_2^2\right ] \amp = \amp 0.
\end{eqnarray}
The limit on the right hand side of  \Eqn{ineqC} is zero because $\underline{\gamma}^{(m)} \rightarrow \infty$ and the sequence $\{\Vn{z}{m}\}$ is bounded.
Since $\M{I} \Kron \M{L}_r$ and $\M{L}_c \Kron \M{I}$ are positive semidefinite, it follows that $\psi(\Vtilde{z}) \geq 0$ and consequently that $\Vtilde{z}$ is feasible for \Eqn{constrained}.

Finally, we argue that $\Vtilde{z}$ is not only feasible but also optimal. Since $\psi(\V{z}) \geq 0$ for all $\V{z}$, we have from the inequality in \Eqn{ineqA} that
\begin{eqnarray}
\label{eq:ineqD}
\frac{1}{2} \lVert \M{P}_\Omega \Vn{z}{m} - \M{P}_\Omega \V{x} \rVert_2^2 & \leq &
\frac{1}{2} \lVert \M{P}_\Omega \V{z}^\star - \M{P}_\Omega \V{x} \rVert_2^2,
\end{eqnarray}
Taking limits of both sides of \Eqn{ineqD} along the subsequence $\mathcal{M}$ gives us
\begin{eqnarray}
\frac{1}{2} \lVert \M{P}_\Omega \Vtilde{z} - \M{P}_\Omega \V{x} \rVert_2^2 & \leq &
\frac{1}{2} \lVert \M{P}_\Omega \V{z}^\star - \M{P}_\Omega \V{x} \rVert_2^2,
\end{eqnarray}
which establishes that $\Vtilde{z}$ is the global minimizer of \Eqn{constrained}. 
 
\item [Step 3: The sequence $\{\Vn{z}{m}\}$ converges to the global minimizer of problem \Eqn{constrained}.] We now have all the facts needed to prove the desired result. The sequence $\Vn{z}{m}$ has at least one limit point, and every limit point is a global solution to \Eqn{constrained}. There is, however, exactly one global solution to \Eqn{constrained} which means that $\Vn{z}{m}$ has exactly one limit point. Therefore, $\Vn{z}{m}$ converges to its one limit point, which is the global minimizer to \Eqn{constrained}. The global minimizer is given by \Eqn{global} and the proof is complete.
\end{description}
\end{proof}

\subsection{\Prop{dof}}

\begin{proof}
Let $\V{x} = \V{\mu} + \V{\varepsilon}$ where $\varepsilon_i$ are uncorrelated random variables with mean 0 and variance $\sigma^2$. Suppose, we have an estimator of $\V{\mu}$ that is a linear mapping of the observed response $\V{x}$, namely $\Vhat{y} = \M{A}\V{x}$ for some matrix $\M{A}$.
Consider the covariance between the $i$th elements of $\Vhat{x}$ and $\V{x}$.
\begin{eqnarray*}
\text{Cov}(\hat{x}_{i},\VE{x}{i}) & = & \mathbb{E}(\M{A}_{i\cdot}\V{\varepsilon}\varepsilon_i) \amp = \amp  \mathbb{E} \left (\sum_{j=1}^n \ME{a}{ij}\varepsilon_j\varepsilon_i \right ) \amp = \amp \ME{a}{ii} \sigma^2.
\end{eqnarray*}
Summing over these covariances and dividing by $\sigma^2$ verifies that the degrees of freedom of $\Vhat{y}$ is given by $\tr(\M{A})$.

Since $\V{z} = \M{S}\Inv\M{P}_\Omega\V{x}$ according to \Eqn{smooth_lin}, it follows that the degrees of freedom of $\V{z}$ is given by $\tr(\M{S}\Inv)$.
\end{proof}

\subsection{\Prop{dof_monotonicity}}

\begin{proof}


Note that if $\mathcal{G}_r$ has $R$ connected components and $\mathcal{G}_c$ has $C$ connected components, then $\M{Z}^\star$ has $RC$ degrees of freedom. 	

(i) Define $\M{A} = \M{I} +  \gamma_r (\M{I} \Kron \M{L}_r) + \gamma_c(\M{L}_c \Kron \M{I})$. Note that $\M{A} \succeq \M{S}$ and under \As{missingness} both $\M{A}$ and $\M{S}$ are positive definite; then~\cite[Corollary 7.7.4 (a)]{horn2013matrix} implies $\M{S}\Inv \succeq \M{A} \Inv$. 
The vectors $\V{\chi}_{B_c}\Kron\V{\chi}_{A_r}$ for $r = 1, \ldots, R, c = 1, \ldots, C$ are all eigenvectors of the matrix $\M{A}$ with eigenvalue 1. Therefore, $\V{\chi}_{B_c}\Kron\V{\chi}_{A_r}$ is an eigenvector of $\M{A}\Inv$ with eigenvalue 1. Since $\M{A}\Inv$ is positive definite all its eigenvalues are positive. Consequently, $\tr(\M{A}\Inv) \geq RC$ and therefore,~\cite[Corollary 7.7.4 (c)]{horn2013matrix} implies $\tr(\M{S}\Inv) \geq RC$.

(ii) Note that $\M{S}(\gamma'_r,\gamma'_c) \succeq \M{S}(\gamma_r,\gamma_c)$ if $\gamma'_r \geq \gamma_r, \gamma'_c \geq \gamma_c$. Therefore, ~\cite[Corollary 7.7.4 (c)]{horn2013matrix} implies $\tr(\M{S}\Inv(\gamma'_r,\gamma'_c)) \leq \tr(\M{S}\Inv(\gamma_r,\gamma_c))$. In words, as expected the degrees of freedom decreases as the amount of regularization increases.

(iii) Let $\M{M} = \M{S}\Inv$. We just need to show that 
$\lim_{k \rightarrow \infty} \text{Cov}(\MnE{Z}{k}{l}, \ME{X}{l}) =  \text{Cov}(\ME{Z}{l}^\star, \ME{X}{l})$, namely that   we can exchange the limit operation and the expectation operation. We will invoke the dominated convergence theorem to ensure that the two operations can be exchanged. For notational convenience, let $\VnE{q}{k}{l} \equiv ([\Mn{M}{k}(\V{x} - \V{\mu})]_l)(x_l - \mu_l)$ denote $\text{Cov}(\MnE{Z}{k}{l}, \ME{X}{l})$. Then, $\VnE{q}{k}{l}$ converges almost surely to $(\VE{z}{l}^{\star} - \mu_l)(x_l-\mu_l)$ since $\Vn{Z}{l} = \Mn{M}{l}\V{x}$ converges almost surely to $\V{Z}^\star$ by \Thm{limiting_solution}. We next identify a nonnegative random variable with finite expectation that bounds $\lvert \VnE{q}{k}{l} \rvert$ for all but finitely many $k$.
\begin{eqnarray*}
\lvert \VnE{q}{k}{l} \rvert & = &
\lvert [\Mn{M}{k}\V{\varepsilon}]_{l} \VE{\varepsilon}{l} \rvert
\amp = \amp \lvert \langle \Mn{M}{k}_{l\cdot}, \V{\varepsilon} \rangle \VE{\varepsilon}{l} \rvert
\amp \leq \amp \lVert \Mn{M}{k}_{l\cdot} \rVert_1 \lVert \V{\varepsilon} \rVert_\infty \lvert  \VE{\varepsilon}{l} \rvert 
\amp \leq \amp \lVert \Mn{M}{k} \rVert_\infty \lVert \V{\varepsilon} \rVert_\infty^2,
\end{eqnarray*}
where we have used H\"older's inequality. Note that   $\lVert \M{M} \rVert_\infty \leq \sqrt{np} \lVert \M{M} \rVert_2$ for all $\M{M}$, by the equivalence between the infinity and operator matrix norms. Thus,
\begin{eqnarray*}
\lvert \VnE{q}{k}{l} \rvert & \leq & \sqrt{np} \lVert \Mn{M}{k} \rVert_2 \lVert \V{\varepsilon} \rVert_\infty^2.
\end{eqnarray*}
Without loss of generality, we can take $\gamma^{(k)}_r, \gamma^{(k)}_c \geq 1$ since $\underline{\gamma}^{(k)} \rightarrow \infty$. Therefore, 
\begin{eqnarray*}
\Mn{M}{k} & \preceq & \left [\M{P}_\Omega + \M{I} \Kron \M{L}_r + \M{L}_c \Kron \M{I} \right]\Inv.
\end{eqnarray*}
Let $\nu$ denote the largest eigenvalue of the matrix on the right hand side. Then 
$\lVert \Mn{M}{k} \rVert_2 \leq \nu$ for all $k$. Applying this bound gives us a final bound on $\lvert \VnE{q}{k}{l} \rvert$ that is independent of $k$ and $l$, i.e.\@, 
$\lvert \VnE{q}{k}{l} \rvert  \leq  \sqrt{np}  \nu \lVert \V{\varepsilon} \rVert_\infty^2$.

Since $\mathbb{E} \lVert \V{\varepsilon} \rVert_\infty^2 < \infty$, we have that $\lim_{k \rightarrow \infty} \text{Cov}(\MnE{Z}{k}{l}, \ME{X}{l}) =  \text{Cov}(\ME{Z}{l}^\star, \ME{X}{l})$ for all $l$ by the dominated convergence theorem. 
\begin{eqnarray*}
\underset{k \rightarrow \infty}{\lim}\>\tr(\M{S}(\Vn{\gamma}{k})\Inv)  & = & 
\underset{k \rightarrow \infty}{\lim}\> \frac{1}{\sigma}\sum_{l=1}^{np} \text{Cov}(\MnE{Z}{k}{l}, \ME{X}{l}) \amp = \amp
\frac{1}{\sigma} \sum_{l=1}^{np} \text{Cov}(\ME{Z}{l}^\star, \ME{X}{l})
\amp = \amp
RC.
\end{eqnarray*}
\end{proof}

\bigskip
\begin{center}
{\large\bf SUPPLEMENTAL MATERIALS}
\end{center}

\begin{description}

\item[Algorithm Derivations:] The Supplementary Materials includes additional details on derivations needed to implement the Quasi-Newton method as well as additional simulation experiments. (\url{https://github.com/echi/IMS/blob/master/BMC_Supplement_JCGS.pdf})

\item[Code:] Matlab code implementing IMS and scripts for regenerating the numerical results are available at \href{https://github.com/echi/IMS}{https://github.com/echi/IMS}.

\end{description}

\begin{center}
{\large\bf ACKNOWLEDGMENTS}
\end{center}
The authors acknowledge Salman Asif and Chris Harshaw for their help on a prior project from which this current work arose.

\bibliographystyle{asa}
\bibliography{bmc}

\end{document}